\newenvironment{customprop}[1]
  {\innercustomprop}
  {\endinnercustomprop}
\newenvironment{customlem}[1]
  {\innercustomlem}
  {\endinnercustomlem}
\newenvironment{customcor}[1]
  {\innercustomcor}
  {\endinnercustomcor}
\newtheorem{prop}{Proposition}[section]
\newtheorem{cor}{Corollary}[section]
\newtheorem{lem}{Lemma}[section]
\DeclareMathOperator{\IF}{IF}
\DeclareMathOperator{\Cov}{Cov}
\DeclareMathOperator{\Q}{Q}
\DeclareMathOperator{\Var}{Var}
\DeclareMathOperator{\diag}{diag}
\DeclareMathOperator*{\argmin}{arg\,min}
\DeclareMathOperator*{\argmax}{arg\,max}
\providecommand{\keywords}[1]{\textbf{\textit{Key words---}} #1}
\begin{document}

\title{Robust functional regression based on principal components}
\author{Ioannis Kalogridis\thanks{Ioannis.Kalogridis@kuleuven.be} \ and Stefan Van Aelst\thanks{Stefan.VanAelst@kuleuven.be} \\ \ \\ Department of Mathematics, KU Leuven, Celestijnenlaan 200B, 3001 Leuven, Belgium}
\maketitle

\begin{abstract}
Functional data analysis is a fast evolving branch of modern statistics and the functional linear model has become popular in recent years. However, most estimation methods for this model rely on generalized least squares procedures and therefore are sensitive to atypical observations. To remedy this, we propose a two-step estimation procedure that combines robust functional principal components and robust linear regression. Moreover, we propose a transformation that reduces the curvature of the estimators and can be advantageous in many settings. For these estimators we prove Fisher-consistency at elliptical distributions and consistency under mild regularity conditions. The influence function of the estimators is investigated as well. Simulation experiments show that the proposed estimators have reasonable efficiency, protect against outlying observations, produce smooth estimates and perform well in comparison to existing approaches.
\end{abstract}
\keywords{Functional linear model, robustness, influence function, functional principal components}

\section{Introduction}

Nowadays practitioners frequently observe data that may be described by independent and identically distributed  pairs $\left(X_i, Y_i \right), i = 1, \ldots, n$ where $Y_i$ is a scalar random variable and $X_{i} (\cdot)$ is a continuous-time stochastic process defined on some compact interval $I$. The relationship between $X$ and $Y$ is often of interest and the functional regression model extends the standard multiple regression model to this case, so that
\begin{equation}
\label{eq:1}
Y_i =  m(X_i) + \epsilon_i, \quad m(X_i) = \alpha + \int_{I} X_i(t) \beta(t) dt
\end{equation}
where $\alpha \in \mathbb{R}$ and $\beta \in D$ for some function space $D$ are unknown quantities that need to be estimated from the data. The random errors $\left\{ \epsilon_i \right\}_{i=1}^n$ are assumed to be independent and identically distributed with zero mean and finite variance $\sigma^2$ and they are also assumed to be independent of the predictor curves.

The model has applications in a vast number of fields, essentially everywhere where curves, spectra or images need to be associated with scalar random variables. Examples include meteorology~\citep{ramsay2006functional},  chemometrics~\citep{ferraty2006nonparametric} and diffusion tensor imaging tractography~\citep{goldsmith2014estimator}.  More generally, the functional framework can be used to deal with ultra-high dimensional regression problems under minimal smoothness assumptions on the coefficient vector. Functional linear regression techniques have also been extended to generalized functional linear models, see e.g.\ \citep{muller2005generalized} and \citep{goldsmith2011penalized}.

Due to the model's usefulness, recent years have seen an explosion of relevant research and many novel methods have been proposed. The core of these methods is either dimensionality-reduction of the predictors or regularization of the coefficient function in the form of a roughness penalty. For an overview, one may consult the monographs of \citet{ramsay2006functional, horvath2012inference,hsing2015theoretical} and more recently, \citet{kokoszka2017introduction}. See also the recent review papers by \citet{febrero2015functional} and \citet{reiss2017methods}. 
Since many of these methods are direct extensions of classical (least squares), principal component and partial least squares procedures, a drawback is that they produce estimators that are not resistant to atypical observations. As a result, a single gross error or outlying observation may significantly affect the quality of the estimators and statistical inference based on them.  \citet{maronna2013robust} further observed that atypical observations may negatively impact the smoothness of the estimated coefficient function. Therefore, they proposed a robust alternative based on  a ridge regression-type procedure that aims to limit the impact of outlying observations while also yielding smooth estimates by penalizing the integrated squared second derivative of the coefficient function.

We take a different approach in this article and, benefiting from recent advances in robust functional data analysis, propose a robust estimator that stems from the dimensionality-reduction principle. In particular, we use functional principal components based on projection-pursuit proposed by~\citet{bali2011robust}. In combination with regression MM estimation \citep{yohai1987high}, robust functional principal components yield a computationally feasible, resistant estimator that is well-suited for the analysis of high-dimensional complex data sets. We then build upon this idea and propose a conceptually simple transformation of the estimators that improves smoothness of the estimates. For regular data the performance of the resulting estimators is comparable to popular least squares based estimates while at the same time the estimator shows good robustness in presence of contamination. 

The rest of the paper is structured as follows: Section 2 lays down the basic framework while Section 3 describes the proposed estimators in detail. Section 4 presents asymptotic results for the proposed estimators. In particular, sufficient conditions for Fisher-consistency and convergence in probability are given. The influence function is derived in Section 5 where it is also compared to the influence function of a classical estimator. Numerical experiments in Section 6 demonstrate the need for robustness and the advantages of our approach with regards to both prediction of the conditional mean and estimation of the coefficient function. The proofs of the theoretical results are relegated to the appendix.

\section{Preliminaries: definitions and notation}

The most popular setting for the functional linear model was introduced by \cite{cardot1999functional} and considers a functional random variable $X$ which is a random element in $L^2[0,1]$, the space of square integrable functions on $[0,1]$ with finite second moments. More precisely,
\begin{equation}
\label{eq:2}
X:\left(\Omega, \mathcal{A}, P \right) \times [0,1]  \to \left\{ X\left(t, \omega \right): \int_{0}^{1} \left[ X(t,\omega) \right]^2 dt < \infty \right\},
\end{equation}
and $\mathbb{E} \int_0^{1} \left[ X\left(t,\omega\right) \right]^2 dt = \int_0^{1} \mathbb{E}  \left[ X\left(t,\omega\right)\right] ^2  dt < \infty$. Provided that the process $X(t,\omega)$ is measurable with respect to the product $\sigma$-field $\mathcal{B}[0,1] \times \mathcal{A}$, $X$ may be equivalently viewed as a random element of the Hilbert space of square integrable functions, see \citet[p.~190]{hsing2015theoretical}. Both formulations are useful and will be used interchangeably henceforth, depending on convenience.

Assuming that the mean function $\mu(t) := \mathbb{E}(X)(t)$ and covariance function $\gamma(s,t) := \Cov\left(X(t), X(s) \right)$ are both continuous, the process is well-known to admit a Karhunen-Loève decomposition 
\begin{equation}
\label{eq:3}
X(t) = \mu(t) + \sum_{j=1}^{\infty}  \left\{ \int_{0}^1  \left[ X(s) - \mu(s) \right] v_j(s) ds \right\} v_j(t)  = \mu(t) + \sum_{j=1}^{\infty} \sqrt{\lambda_j }Z_j v_j(t).
\end{equation}
where $\langle x, y \rangle := \int x(t) y(t) dt$ denotes the usual $L^2$ inner product. Further, $\left\{Z _j \right\}_{j} : = \left\{ \langle X - \mu, v_j \rangle /\sqrt{\lambda_j} \right\}_{j}$ are uncorrelated random variables with zero mean and unit variance and $\left\{\left(\lambda_j, v_j \right) \right\}_{j\geq1}$ are eigenvalue-eigenfunction pairs of  $\Gamma$, the self-adjoint, Hilbert-Schmidt covariance operator of the process  $X$, defined by
\vspace{0.1cm}
\begin{equation}
\label{eq:4}
\Gamma : L^2[0,1] \to L^2[0,1]  \quad  f \in L^2[0,1] \mapsto \int_{0}^{1} \gamma(s,t) f(s) ds,
\end{equation}
which is an integral operator with $\gamma(s,t)$ as its kernel. As a consequence of Mercer's theorem, it can be shown that the sum in (\ref{eq:3}) converges in mean-square and uniformly in $\left[0,1\right]$, that is,
\begin{equation}
\label{eq:5}
\lim_{n \to \infty} \sup_{t \in [0,1]} \mathbb{E} \left[ X(t) - \mu(t) -  \sum_{i=1}^n  \sqrt{\lambda_j }Z_j v_j(t) \right]^2 = 0.
\end{equation}
See \cite{hsing2015theoretical} for more details. This theorem essentially means that, on average, $X$ can be approximated arbitrarily well by its finite-dimensional projection on the subspace spanned by the eigenfunctions of its covariance operator, uniformly on its domain.

Let us define the cross-covariance operator $\Delta(\cdot) := \mathbb{E} \langle X-\mu, \cdot \rangle (Y-\mathbb{E}(Y))$, then assuming $\lambda_1 \geq \lambda_2 \geq \ldots >0$ and using the independence of $X$ and $\epsilon$, it is easy to see that 
\begin{equation}
\label{eq:6}
\beta(t)   = \sum_{j=1}^{\infty} \frac{\Delta v_j}{\lambda_j}  v_j(t),
\end{equation}
which naturally suggests the functional principal component (FPCR) estimator 
\begin{equation}
\label{eq:7}
\widehat{\beta}(t) = \sum_{j=1}^K \frac{\Delta_n \widehat{v}_j}{\widehat{\lambda}_j} \widehat{v}_j (t),
\end{equation}
where $\Delta_n$ is now the empirical cross-moment operator and $ \widehat{\lambda}_j$ denotes the $j$th leading non-zero eigenvalue of the empirical covariance operator $\widehat{\Gamma}_n$ while $\widehat{v}_j$ denotes the corresponding eigenfunction. 

The FPCR estimator in (\ref{eq:7}) is a sieves estimator that approximates an infinite-dimensional function by its projections on a sequence of finite-dimensional subspaces. This is necessitated by the fact that since $\Gamma$ is Hilbert-Schmidt(hence compact) it has an unbounded inverse if $X$ is infinite-dimensional and therefore the coefficient function cannot be estimated directly. The dimension $K$ thus acts as a smoothing parameter and the usual trade-off between bias and variance applies. Provided that $K \to \infty$ and under some regularity conditions on the rate of decay of the eigenvalues, the process $X$ and the error $\epsilon$, the estimator has been shown to converge in probability and almost surely \citep{cardot1999functional, hall2007methodology}.	

A somewhat disconcerting feature of the FPCR estimator in (\ref{eq:7}) is that it remains quite rough even when the sample size is moderately large. This fact has motivated proposals that impose smoothness of the solution. Although this may be nominally achieved by applying a roughness penalty on either the extraction of the eigenfunctions or the least squares criterion that leads to the FPCR estimator it is the latter approach that is the most popular in the literature. In particular, \citet{ramsay2006functional} and \citet{li2007rates} discuss a Fourier basis expansion with a squared harmonic acceleration and a squared second derivative penalty respectively. The case of penalized B-spline expansions is investigated by \citet{cardot2003spline} and \citet{crambes2009smoothing} while \citet{reiss2007functional} propose a hybrid approach combining B-splines and dimensionality reduction through principal components. More recently, \citet{shin2016rkhs} have proposed a family of estimators based on reproducing kernel Hilbert spaces by generalizing work of \citet{yuan2010reproducing}. 

Both kinds of penalties can be incorporated into the robust functional principal component-robust regression framework proposed herein. While only a regression penalty is considered in detail in this paper, the methodology of Section 3.1. can also be easily extended to the case of smoothed eigenfunctions estimated by the procedure of \cite{bali2011robust}, for example.

\section{The proposed estimators}

\subsection{A robust functional principal component estimator}

Our proposal is motivated by observing from (\ref{eq:6}) that $\Delta v_j/ \lambda_j = \langle \beta, v_j \rangle$, so that an estimator for $\beta(t)$ may be obtained by estimating the scores of the coefficient function on the complete sequence $\left\{ v_j \right\}_{j \geq 1}$ of orthonormal functions. From the definition of $\Delta$ and (\ref{eq:3}), it follows that estimators of these scores can be obtained by  first centering the the responses $\left\{Y_i \right\}_{i=1}^n$ and then regressing them  on $\left\{ \langle X_i - \overline{X}, v_j \rangle \right\}_{i=1}^n$ for $ j = 1, \ldots K$. Here, $\overline{X}$ denotes the pointwise sample mean of the curves $\left\{X_i\right\}_{i=1}^n$. 

Neither functional principal components derived from the covariance operator nor least-squares regression methods are robust to anomalous observations and this remains true even if penalized estimators are used. To obtain a robust method we instead propose combining M-estimators of location for functional data, \citep{sinova2018m}, functional principal components based on projection pursuit, \citep{bali2011robust}, and MM estimators for regression \citep{yohai1987high}. We briefly review these ideas and explain their place in our proposal.

Robust estimators of univariate and multivariate location of the maximum likelihood type (M-estimators) have a well-established place in robust statistics, see e.g. \citet{huber2009robust}. In the infinite-dimensional setting robust location estimators are even more important due to the large variety of possible outlying behaviour, see \cite{hubert2015multivariate} for an extensive discussion. Recently, \cite{sinova2018m} defined M-estimators in the functional setting as
\begin{equation}
\label{eq:8}
\widehat{\mu} =  \argmin_{y\in L^2[0,1]} \sum_{i=1}^n \rho\left( \left\| X_i-y\right\| \right),
\end{equation}
where $\rho:\mathbb{R}\to \mathbb{R}_{+}$ is an even continuous, nondecreasing function satisfying $\rho(0)=0$.  \cite{sinova2018m} have shown that these estimators are well-defined, have maximal breakdown value and are consistent under suitable model assumptions. They have further supplied a fast computational algorithm that makes them well-suited for the present problem.

The definition of the loss functions in (\ref{eq:8}) allows one to consider both redescending and monotone functional M-estimators of location. We shall use the Huber family of $\rho$ functions on $\mathbb{R}_{+}$, given by
\begin{equation}
\rho_{k}(x) = \begin{cases} x^2/2 & 0\leq x \leq k; \\ 
k\left(x-k/2 \right)&  k < x, \end{cases}
\end{equation}
with $k$ a tuning parameter, because these authors have found that these estimates exhibit good performance in a wide range of models. Using an absolute loss in $L^2[0,1]$ leads to the functional median which can be computed very fast \citep{gervini2008robust}. Therefore, it serves as a starting point for the algorithm computing the functional M-estimator. 
 
The motivation for the projection pursuit idea for functional principal components comes from noticing that, as in the multivariate setting, the first eigenfunction $v_1(t)$ may be derived as the solution to the problem
\begin{equation}
\label{eq:10}
\sup_{\left\{ v \in L^2[0,1]: \left\|v \right\| = 1 \right\}} \Var\left( \langle v, X \rangle \right) = \sup_{\left\{v \in L^2[0,1]: \left\|v \right\| = 1 \right\}} \langle v, \Gamma v \rangle.
\end{equation}
The supremum of this expression is the largest eigenvalue $\lambda_1$ of $\Gamma$. Subsequent directions $v(t)$ may be obtained by imposing orthogonality conditions , i.e. by maximizing $\Var (\langle v, X \rangle)$ over the set of functions $\left\{ v \in L^2 [0,1]: \langle v, v_j \rangle = 0, 1\leq j \leq m-1 \right\}$ for $m \geq 2$. The corresponding maximal variance is now equal to the $m$th largest eigenvalue of $\Gamma$ which implies that the solutions are unique provided that the eigenvalues are distinct. Estimated eigenfunctions may be obtained by replacing $\Gamma$ with $\widehat{\Gamma}_n$ in (\ref{eq:10})  or, equivalently, by replacing the population variance with the sample variance.

Since it is well-known that the sample variance is heavily influenced by outlying observations, \citet{bali2011robust} proposed using a robust scale functional as the objective function. There are several candidates for the robust scale, but we opt for the Qn estimator \citep{rousseeuw1993alternatives}. For a sample $\left\{x_1, \ldots, x_n \right\}$ this generalized $L$-estimator is defined by
\begin{equation}
Q_n = d \left\{| x_i - x_j|; i<j \right\}_{(k)},
\end{equation}
where $d$ is a constant ensuring Fisher-consistency at the given distribution and $k$ is chosen such that the order statistic $\left\{| x_i - x_j|; i<j \right\}_{(k)}$ roughly corresponds to the first quartile of the absolute pairwise differences. $Q_n$ has a number of desirable properties which include a smooth bounded influence function for the corresponding scale functional $Q$, the highest possible breakdown value in the class of location invariant and scale equivariant functionals and a high efficiency. See \cite{rousseeuw1993alternatives} for more details.

At the population level the first projection pursuit eigenfunction based on the $Q$ scale functional can now be defined as
\begin{equation}
v_1(t) = \argmax_{\left\{v \in L^2[0,1]: \left\|v \right\| = 1 \right\}} \Q \left( \langle v, X \rangle \right).
\end{equation}
As before, we look for subsequent maximizers of $\Q \left( \langle v, X \rangle \right)$ in orthogonal directions to estimate the other eigenfunctions. This problem is a direct generalization of multivariate projection-pursuit principal components \citep{li1985projection,croux2005high} to the Hilbert space of square integrable functions. Since these functions tend to be discretized this functional generalization may be understood as principal component analysis in the presence of a very large number of variables. Sample projection-pursuit functional principal components are routinely obtained by replacing the scale functional $Q$ with its sample counterpart $Q_n$.

For simplicity it will henceforth be assumed that $\alpha = 0$ in (\ref{eq:1}) so that emphasis is placed on the coefficient function $\beta(t)$. The centered and projected observations $\langle X_i - \widehat{\mu}, \widehat{v}_j \rangle_{1 \leq j \leq K }$ along with a column of ones form the predictor matrix for the regression step. Denoting the rows of the predictor matrix by $\left\{\widehat{\mathbf{x}}_i\right\}_{i=1}^n$ for simplicity, the MM-estimator $\widehat{\boldsymbol{\beta}} := ( \widehat{\beta}_0, \boldsymbol{\widehat{\beta}}_1^{\top})^{\top}$ for $\left( \langle \beta, \mathbb{E}(X) \rangle, \langle \beta, v_1 \rangle, \ldots, \langle \beta, v_K \rangle \right)$ satisfies
\begin{equation}
\label{eq:13}
\frac{1}{n}\sum_{i=1}^n \rho^{\prime}_1 \left( \frac{Y_i - \mathbf{\widehat{x}}_i^{\top} \boldsymbol{\widehat{\beta}} }{\widehat{\sigma}_n} \right) \mathbf{\widehat{x}}_i = \mathbf{0},
\end{equation}
where  $\rho_1$ is a bounded nondecreasing even function from $\mathbb{R}$ to $\mathbb{R}_{+}$ and $\widehat{\sigma}_n$ is a robust scale that is needed to make the estimator equivariant, see \cite{maronna2006robust}. Specifically, for an initial robust, consistent and equivariant regression estimator $\boldsymbol{\widehat{\beta}}^{in}$, $\widehat{\sigma}_n$ is an M-scale implicitly defined  as the solution of 
\begin{equation}
\label{eq:14}
\frac{1}{n} \sum_{i=1}^n \rho_{0} \left( \frac{Y_i - \mathbf{\widehat{x}}_i^{\top} \boldsymbol{\widehat{\beta}}^{in}}{\widehat{\sigma}_n(\boldsymbol{\widehat{\beta}}^{in})} \right) = b,
\end{equation}
where $\rho_{0}$ is another $\rho$ function satisfying $\rho_1(x) \leq \rho_{0}(x)$. The constant $b$ controls the  breakdown value of the estimator. Let $a := \sup \rho_0(x)$,  then taking $b = 0.5 a$ ensures the maximal breakdown value of $50\%$. 

Although other initial estimators are permitted, $\boldsymbol{\widehat{\beta}}^{in}$ is usually taken to be the associated S-estimator  \citep{rousseeuw1984robust} which is the minimizer of the robust scale $\widehat{\sigma}_n(\boldsymbol{\beta})$. Hence,
\begin{equation}
\boldsymbol{\widehat{\beta}}^{in} = \argmin_{\boldsymbol{\beta} \in \mathbb{R}^{K+1}} \widehat{\sigma}_n \left( \mathbf{r}(\boldsymbol{\beta}) \right),
\end{equation}
where for any $\boldsymbol{\beta} \in \mathbb{R}^{K+1}$, the corresponding M-scale $\widehat{\sigma}_n(\boldsymbol{\beta})$ is the solution of (\ref{eq:14}). If $\rho$ is differentiable, then $\widehat{\boldsymbol{\beta}}^{in}$ is also an M-estimator provided that the scale $\widehat{\sigma}_n = \widehat{\sigma}_n (  \mathbf{r}(\boldsymbol{\widehat{\beta}}^{in}) )$ is updated simultaneously with $\boldsymbol{\widehat{\beta}}^{in}$, as noted by \cite{maronna2006robust}. An advantage of this approach is that it yields both an initial robust estimator $\boldsymbol{\widehat{\beta}}^{in}$ as well as a robust scale $\widehat{\sigma}_n$ of the residuals.

The most commonly used bounded family of rho functions is given by Tukey's bisquare family \citep{beaton1974fitting}
\begin{equation}
\label{eq:16}
\rho_c(x) = \begin{cases} 1-\left[1-(x/c)^2 \right]^3  & |x| \leq c; 
\\ 1 & |x| > c, \end{cases}
\end{equation}
where $c$ is again a tuning parameter that controls the robustness and efficiency of the estimator. The inequality $\rho_{1}(x) \leq \rho_{0}(x)$ may then be achieved by judicious choice of $c$ for $\rho_{1}(x)$ after ensuring that (\ref{eq:14}) holds. MM-estimators avoid the common trade-off between robustness and efficiency as the breakdown value is determined by the robust scale estimator $\widehat{\sigma}_n$ while the function $\rho_1$ can be tuned to achieve a desired efficiency for the estimator.

Putting everything together, the proposed robust functional principal component regression estimator (RFPCR) for the coefficient function $\beta(t)$ is given by
\begin{equation}
\label{eq:17}
\widehat{\beta}_{{\scriptscriptstyle \text{RFPCR}}}(t)  = \sum_{j=1}^K \widehat{\beta}_{1j} \widehat{v}_j (t),
\end{equation} \\
where $\widehat{\boldsymbol{\beta}}_1$ is the MM-estimator of the slopes in (\ref{eq:13}). The coefficient function $\beta$ is thus estimated by its projection on the linear space spanned by the projection-pursuit eigenfunctions.  This approximating eigen-space may be very different from the space spanned by the classical eigenfunctions of the covariance operator in the presence of atypical observations. However, for regular data these two eigen-spaces will asymptotically coincide under some assumptions on the process $X$, as will be seen in Section 4.

If the intercept $\alpha$ is not zero then it is subsumed under the MM-estimate of the intercept $\widehat{\beta}_0$. A direct estimate of $\alpha$ can then easily be obtained by  
\begin{equation}
\label{eq:18}
\widehat{\alpha} = \widehat{\beta}_0 - \int_{[0,1]} \widehat{\beta}_{{\scriptscriptstyle \text{RFPCR}}}(t) \widehat{\mu}(t) dt =  \widehat{\beta}_0 - \sum_{j=1}^K \widehat{\beta}_{1j} \int_{[0,1]} \widehat{v}_{j}(t) \widehat{\mu}(t) dt,
\end{equation}
in parallel to classical linear regression.

The estimator in (\ref{eq:17}) fulfils the need of robustness but does not yet ensure smoothness of the estimated coefficient function. Therefore, we propose an adaptation of the estimator which achieves smoothness while at the same time exhibits the same asymptotic behaviour as $\beta_{{\scriptscriptstyle \text{RFPCR}}}(t)$.

\subsection{A robust penalized functional principal component estimator}

As mentioned previously, the estimator in (\ref{eq:17}) can be quite wiggly in the presence of noise and/or contamination. For this reason it is desirable to estimate the regression coefficients in such a way that the final estimator of $\beta(t)$ exhibits more smoothness. One way to accomplish this is by incorporating a smoothness constraint in the MM estimator. Hence, the coefficients of $\beta(t)$ can be estimated by minimizing
\begin{equation}
\label{eq:19}
\frac{1}{n} \sum_{i=1}^n \rho_1 \left( \frac{Y_i - \mathbf{\widehat{x}}_i^{\top} \boldsymbol{\beta} }{\widehat{\sigma}_n} \right) + \lambda \int_{0}^{1} \left[ \beta^{\prime \prime} (t) \right]^2 dt,
\end{equation}
where $\boldsymbol{\beta} = \left(\beta_0, \boldsymbol{\beta}_1^{\top} \right)^{\top}$ with $\beta_0 \in \mathbb{R}$ and $\boldsymbol{\beta}_1 \in \mathbb{R}^{K}$. Equivalently, the criterion can be written as
\begin{equation*}
\frac{1}{n} \sum_{i=1}^n \rho_1 \left( \frac{Y_i - \beta_0 - \mathbf{\widetilde{x}}_i^{\top} \boldsymbol{\beta}_1 }{\widehat{\sigma}_n} \right) + \lambda \boldsymbol{\beta}_1  \mathbf{A} \boldsymbol{\beta}_1,
\end{equation*}
where $\mathbf{\widehat{x}}_i = (1, \mathbf{\widetilde{x}}_i^{\top})^{\top}$ and $\mathbf{A}$ is the $K\times K$ matrix of integrated products of second derivatives of the eigenfunctions, that is, $A_{ij} = \int_{0}^{1} \widehat{v}_i^{\prime \prime} (t) \widehat{v}_j^{\prime \prime} (t) dt$. The shape of the estimated coefficient function will thus depend on the value of $\lambda$. The choice $\lambda = 0$ leads to the unconstrained minimization of the loss function corresponding to the $\beta_{{\scriptscriptstyle \text{RFPCR}}}(t)$ estimator. On the other hand, as $\lambda \to \infty$ deviations of the estimated coefficient function from a straight line will be severely penalized. To the best of our knowledge, penalties on eigenfunction expansions have not been considered for the functional linear model, neither in classical nor robust approaches, because penalization of deterministic basis expansions is generally applied. 

Direct minimization of (\ref{eq:19}) may be accomplished by the strategy of \cite{maronna2013robust}. However, to avoid an additional computational burden we take a different approach in the spirit of the modified Silvapulle estimator considered by \citep{maronna2011robust}. That is,  we propose to transform the MM-estimator $\widehat{\boldsymbol{\beta}}_{1}$ into
\begin{equation}
\label{eq:20}
\widehat{\boldsymbol{\beta}}^{tr}_{1} \left(\lambda \right) = \left(\widetilde{\mathbf{X}}^{\top} \widehat{\mathbf{W}} \widetilde{\mathbf{X}} + \lambda \mathbf{A} \right)^{-1} \widetilde{\mathbf{X}}^{\top} \widehat{\mathbf{W}} \widetilde{\mathbf{X}} \widehat{\boldsymbol{\beta}}_{1},
\end{equation}
where $\mathbf{\widetilde{X}}$ is the matrix containing $\left\{\widetilde{\mathbf{x}}_i \right\}_{i=1}^n$ as its rows and $\widehat{\mathbf{W}} = \diag \left( \left\{ \widehat{w}_i  \right\}_{i=1}^n \right)$ with $\left\{ \widehat{w}_i  \right\}_{i=1}^n$ the weights corresponding to $(\widehat{\beta}_0, \widehat{\boldsymbol{\beta}}_{1})$. That is, $\widehat{w}_i = \rho_1^{\prime} \left( \widehat{u}_i/\widehat{\sigma}_n \right)/\left( \widehat{u}_i/\widehat{\sigma}_n \right) $ with $ \widehat{u}_i = Y_i - \widehat{\beta}_0 - \widetilde{\mathbf{x}}_i^{\top} \widehat{\boldsymbol{\beta}}_1$ the estimated residuals.  A benefit of this transformation is that the asymptotic behaviour of $\widehat{\boldsymbol{\beta}}_1^{tr} $ is intimately linked with the asymptotic behaviour of $\widehat{\boldsymbol{\beta}}_{1}$, as shown later. Other roughness penalties may be incorporated by replacing $\mathbf{A}$ by the corresponding penalty matrix.

The motivation for transforming the regression estimates in this way arises from ridge regression. The ridge estimator $\widehat{\boldsymbol{\beta}}_{R}$ for mean-centered $\left(\mathbf{X}, \mathbf{Y} \right)$ fulfils 

\begin{equation}
\label{eq:21}
\widehat{\boldsymbol{\beta}}_{R} = \left(\mathbf{X}^{\top} \mathbf{X} + \lambda \mathbf{I} \right)^{-1} \mathbf{X}^{\top} \mathbf{X} \widehat{\boldsymbol{\beta}}_{OLS}.
\end{equation}
In view of this, \cite{silvapulle1991robust} proposed shrinking a  monotone M-estimator instead of the least squares estimator. The resulting estimator is easy to compute and is robust with respect to outliers in the response space as it only depends on $\mathbf{Y}$ through the M-estimator. However, it remains vulnerable to leverage points. Therefore, \cite{maronna2011robust} proposed robustly centering the variables as well as using the weighting matrix produced by the MM-estimator of $\left(\mathbf{X}, \mathbf{Y}\right)$ to downweight outlying observations in the predictor space. In the ridge regression framework \cite{maronna2011robust} found this approach  effective for the case $K<n$ but impossible for $K>>n$ due to the fact that a regular MM-estimator is not well-defined in that case. 

The extension of this idea to functional principal component regression does not present such difficulty as there are at most $(n-1)$ eigenfunctions of the (sample) covariance operator and therefore at most $(n-1)$  regressors. The transformation in (\ref{eq:20}) stems from this idea but we do not center the columns of $\widetilde{\mathbf{X}}$ as the $X_i$s were centered before projecting and, under conditions given in Section 4, 
\begin{equation}
\frac{1}{n}\sum_{i=1}^n \langle X_i - \widehat{\mu}, \widehat{v}_j \rangle \xrightarrow{P} 0,
\end{equation}
for $j = 1, \ldots, K$ by the consistency of the estimated quantities, the Law of Large Numbers and Fubini's theorem since $\mathbb{E} \langle X - \mu, v_j \rangle = \langle \mathbb{E}(X-\mu), v_j \rangle = 0$.


Based on the estimator $\widehat{\boldsymbol{\beta}}^{tr}_{1}\left(\lambda \right)$ in~\eqref{eq:20} an alternative estimator of the coefficient function $\beta(t)$ is the robust functional principal component penalized regression estimator (RFPCPR) given by 
\begin{equation}
\label{eq:23}
 \widehat{\beta}_{{\scriptscriptstyle \text{RFPCPR}}}(t)  = \sum_{j=1}^K \widehat{\beta}_{1j}^{tr}(\lambda) \widehat{v}_j (t).
\end{equation}
An estimator of $\alpha$ may be obtained as in (\ref{eq:18}). Note that the estimated coefficient function $\widehat{\beta}_{{\scriptscriptstyle \text{RFPCPR}}}$ still belongs to the subspace spanned by the first $K$ robust eigenfunctions of $X$, but the estimates of the scores $\langle \beta, v_j \rangle_{j=1}^K$ have been updated to increase the smoothness of the estimator. 

In addition to $K$, the dimension of the approximating eigenspace, the penalized estimator requires the choice of the tuning parameter $\lambda$. Generally, this is an undesirable feature as optimization over both a discrete and a continuous parameter is required, which normally results in heavy computational burden. To avoid this, we outline a computationally efficient selection strategy for the penalty parameter that works well in our experience.

\subsection{Selection of the smoothing parameters $K$ and $\lambda$}

The RFPCR estimator introduced in Section 3.1  only requires the choice of the $K$, the number of principal components that is used in the regression step. This selection may be made on the basis of a robust form of k-fold cross-validation. That is, instead of the mean-squared error as a performance criterion we may use a robust dispersion measure, such as the squared $\tau$-scale \citep{yohai1988high} which may be viewed as a truncated standard deviation and combines robustness and high efficiency.  To ensure a good trade-off in that respect we recommend selecting tuning parameters that yield approximately 80\% efficiency at the Gaussian model. Although qualitatively similar to the $Qn$, the $\tau$-scale is faster to compute and thus lends itself to a fast search through the	candidate models.

A drawback of five or ten-fold cross-validation for small but complex datasets is that the number of chosen components may depend on the initial random partition of the dataset. This problem can be overcome by n-fold (leave-one-out) cross-validation. 
Let $\widehat{\boldsymbol{\beta}} =   (\widehat{\beta}_0, \widehat{\boldsymbol{\beta}}_1^{\top} )^{\top}$ denote the MM-estimator in (\ref{eq:13}) and $\widehat{\mathbf{X}}$ the matrix containing the scores on the eigenfunctions and a column of ones. The MM-estimates may be rewritten as
\begin{equation}
\label{eq:24}
\widehat{\boldsymbol{\beta}} =  \left( \widehat{\mathbf{X}}^{\top} \mathbf{\widehat{W}} \widehat{\mathbf{X}} \right)^{-1} \widehat{\mathbf{X}}^{\top} \mathbf{\widehat{W}} \mathbf{Y},
\end{equation}
for some weighting matrix $\mathbf{\widehat{W}}$ depending on the residuals $\left\{\widehat{u}_i \right\}_{i=1}^n$. Call $\widehat{Y}_{-i}$ the predicted value of $Y_i$ computed without observation $i$. Expression (\ref{eq:24}) suggests that the MM-estimator $\widehat{\boldsymbol{\beta}}$ falls into the class of linear smoothers but the weights depend on $\mathbf{Y}$, which implies that well-known computational short-cuts for leave-one-out procedures do not extend to the present case. Nevertheless, as a first-order approximation it holds that
\begin{equation}
\widehat{u}_{-i} = Y_i - \widehat{Y}_{-i} \approx \frac{\widehat{u}_i}{1-\widehat{h}_{ii}}
\end{equation}
where $\widehat{h}_{ii}$ is the ith diagonal element of the hat matrix $\widehat{\mathbf{H}} := \widehat{\mathbf{X}}  ( \widehat{\mathbf{X}}^{\top} \mathbf{\widehat{W}} \widehat{\mathbf{X}})^{-1} \widehat{\mathbf{X}}^{\top} \mathbf{\widehat{W}} $ that is obtained upon convergence. Define $\widehat{\mathbf{u}}_{-} := \left( \widehat{u}_{-1}, \ldots, \widehat{u}_{-n} \right)^{\top}$, the vector of leave-one-out residuals. These leave-one-out residuals depend on the number of regressors. We propose to select the number of components $K$ which minimizes the squared $\tau$-scale of the leave-one-out residuals $\tau^2\left(\widehat{\mathbf{u}}_{-} \right)$, as in \cite{maronna2011robust}. 

The RFPCPR estimator introduced in Section 3.2 presents the difficulty that $\lambda$ needs to be chosen in conjunction with $K$, it is in fact nested in $K$. Selection through the "double-cross" of \citep{stone1974cross} with $\tau$-scales is possible, but too time-consuming. Therefore,  it is preferable to have a direct estimate of $\lambda$. To obtain such an estimate we rewrite the problem in \eqref{eq:19} as follows. Let $\mathbf{T} \mathbf{\Lambda} \mathbf{T}^{\top}$ be the eigenvalue-eigenvector decomposition of the penalty matrix $\mathbf{A}$ in (\ref{eq:19}) with 
\begin{equation*}
\mathbf{\Lambda} = \diag \left( \lambda_1, \ldots, \lambda_s, 0, \ldots, 0 \right)
\end{equation*}
$\lambda_1 \geq \ldots \lambda_s >0$ and $\mathbf{T}$ orthogonal. Note that $\mathbf{A}$ is at least positive-semidefinite so that it does not contain negative eigenvalues. Let $\mathbf{\Lambda}_1 = \diag\left( \sqrt{\lambda_1}, \ldots, \sqrt{\lambda_s}, 1, \ldots, 1 \right) $, then we have that
\begin{equation*}
\mathbf{T} \mathbf{\Lambda} \mathbf{T}^{\top} = \mathbf{T} \mathbf{\Lambda}_1 \begin{bmatrix}
\mathbf{I}_s & \mathbf{0} \\ \mathbf{0} &\mathbf{0}  
\end{bmatrix} \mathbf{\Lambda}_1 \mathbf{T}^{\top},
\end{equation*}
By setting $\boldsymbol{\gamma}_1 = \mathbf{\Lambda}_1 \mathbf{T}^{\top} \boldsymbol{\beta}_1$ equation (\ref{eq:19}) may be rewritten as
\begin{equation}
\label{eq:26}
\frac{1}{n} \sum_{i=1}^n \rho_1 \left( \frac{Y_i - \beta_0 - \mathbf{\widetilde{x}}_i^{\top} \mathbf{T} \boldsymbol{\Lambda}_1^{-1}  \boldsymbol{\gamma}_1   }{\widehat{\sigma}_n} \right) + \lambda \boldsymbol{\gamma}_1^{\top} \begin{bmatrix}
\mathbf{I}_s & \mathbf{0} \\ \mathbf{0} &\mathbf{0}  
\end{bmatrix} \boldsymbol{\gamma}_1.
\end{equation}
By setting $\boldsymbol{\gamma}_1 = \left(\mathbf{u}^{\top}, \mathbf{b}^{\top} \right)^{\top}$ with  $\mathbf{u} \in \mathbb{R}^s, \mathbf{b} \in \mathbb{R}^{K-s}$ and splitting $\mathbf{T} \mathbf{\Lambda}_1^{-1}$ into $\left[\mathbf{Z} \  \mathbf{F} \right]$ accordingly, this can be rewritten as
\begin{equation}
\label{eq:27}
\frac{1}{n} \sum_{i=1}^n \rho_1 \left( \frac{Y_i - \beta_0 - \mathbf{\widetilde{x}}_i^{\top} \mathbf{Z} \mathbf{u} -  \mathbf{\widetilde{x}}_i^{\top} \mathbf{F} \mathbf{b}  }{\widehat{\sigma}_n} \right) + \lambda \mathbf{u}^{\top} \mathbf{u}.
\end{equation}
Note that for the case of least squares loss ($\rho_{1}(x) = \widehat{\sigma}_n x^2)$ and upon dividing by $\sigma^2$ equation (\ref{eq:27}) corresponds to the best linear unbiased prediction (BLUP) criterion for the linear mixed model
\begin{equation}
\label{eq:28}
\mathbf{Y}|\mathbf{u} \sim \mathcal{N}\left( \beta_0 \mathbf{1}_n + \widetilde{\mathbf{X}}\mathbf{Z}\mathbf{u} + \widetilde{\mathbf{X}} \mathbf{F} \mathbf{b}, \sigma^2  \right), \quad \mathbf{u} \sim \mathcal{N}\left(0, \sigma^2/\lambda \mathbf{I}_s \right).
\end{equation}
This connection has been exploited by \cite{reiss2007functional} and \cite{goldsmith2011penalized} in order to estimate $\lambda$ by maximum or restricted maximum likelihood for their penalized functional regression estimators. 

Due to the structure of the transformation in (\ref{eq:20}), ML or REML estimation of $\lambda$ yields good results in clean data but yields unsatisfactory estimates in the presence of outliers. To overcome this problem we propose a simple adjustment of the method. Let $\mathbf{w}^{*}$ denote the vector of weights corresponding to the MM-estimator of $((\mathbf{\widetilde{X}}\mathbf{Z}, \mathbf{\widetilde{X}}\mathbf{F} ), \mathbf{Y})$, i.e. the case $\lambda = 0$ in (\ref{eq:27}). A resistant estimator of the variance components may be obtained by weighing likelihood contributions with $\mathbf{w}^{*}$. As $\rho_1(x)$ is smooth but non-convex, large outliers will receive zero weight and therefore aberrant observations will not influence the estimation of the variance components. Alternatively, one could apply "hard" rejection which gives weight zero to extreme outliers and assigns weight one to the remaining observations. Although more crude, this approach tends to work well thanks to the exact fit property of MM estimators \citep{maronna2006robust}. 

With this plug-in value of $\lambda$ depending only on the number of components, the problem is again one-dimensional and the aforementioned robust cross-validation approach may be used in order to select the number of components $K$. 

\section{Asymptotic results}

\subsection{Fisher consistency}

Before examining whether the estimator converges it is important to examine whether the correct quantities are estimated, in other words, whether the estimator is Fisher-consistent. To do this, it is convenient to view estimators as functionals on the space of distribution functions equipped with the weak topology \citep{huber2009robust}. In that sense, an estimator $T(\cdot)$ is a functional applied on the empirical distribution function $F_n$ and we call this estimator Fisher-consistent for a population parameter $\theta$ if
\begin{equation}
T(F_\theta) = \theta,
\end{equation}
so that the estimator yields the correct value of the parameter when applied to the population distribution function.

In what follows we consider the RFPCR estimator, but all remarks apply to the smoothed RFPCPR estimator as well, given that these two estimators are asymptotically equivalent as will be argued later. First, we write the functional principal component regression estimator introduced in (\ref{eq:17}) as a functional of the related distribution functions. Let $F$ denote the distribution function of the error term $\epsilon$ and let $\mathbb{P}_X$ denote the distribution law (image measure) of $X$, that is, $\mathbb{P}_X(B) = \mathbb{P}(X\in B)$ for a Borel set $B$. Note that since $X$ is a Hilbertarian random variable in general this law cannot be described by a cumulative distribution function, but we can define 
\begin{equation*}
G\left(x_1, \ldots, x_k \right) := \mathbb{P}_X \left( \langle X-\mu(\mathbb{P}_X), v_1\left(\mathbb{P}_X\right) \rangle \leq x_1, \ldots, \langle X-\mu(\mathbb{P}_X), v_K\left(\mathbb{P}_X\right) \rangle \leq x_k    \right),
\end{equation*}
as the distribution function of the vector of finite-dimensional projections. With this notation the functional corresponding to the RFPCR estimator may be written as
\begin{equation}
\label{eq:29}
\beta_{{\scriptscriptstyle \text{RFPCR}}} \left(F, \mathbb{P}_X \right)(t)  = \sum_{j=1}^K \beta_{1j} \left(F, G \right) v_j \left(\mathbb{P}_X \right) (t).
\end{equation}
This functional is Fisher-consistent if $\beta_{{\scriptscriptstyle \text{RFPCR}}} \left(F, \mathbb{P}_X  \right)(t) = \beta(t)$ for $t \in[0,1]$ or equivalently, if $\beta_{1j} \left(F, G \right) = \beta_{1j}$ and $v_j \left(\mathbb{P}_X \right)(t) = v_j(t)$ for $j=1, \ldots, K$. This depends on the properties of the underlying estimators, namely M-estimators of location for functional data, projection-pursuit functional principal component estimators and MM-estimators of regression. The following three assumptions, which we discuss after the statement of Lemma 4.1, are sufficient to obtain Fisher-consistency.

\begin{enumerate}
\item[(C1)] X has a finite-dimensional Karhunen-Loève decomposition: $X(t)$ = $\mu(t) + \sum_{j=1}^K \sqrt{\lambda_j} Z_j v_j(t)$ with $ \lambda_1 > \ldots > \lambda_K >0$.
\item[(C2)] The random variables $\left\{Z_j \right\}_{j=1}^K$ are absolutely continuous and have joint density $g\left(\mathbf{x} \right)$ satisfying $g(\mathbf{x}) = h \left(||\mathbf{x}||_{E} \right)$ for $\mathbf{x} \in \mathbb{R}^K$ and some measurable function $h:\mathbb{R} \to \mathbb{R}_{+}$.
\item[(C3)] $F$ is absolutely continuous and has a density $f$ that is even, decreasing in $|x|$ and strictly decreasing in $|x|$ in a neighbourhood of zero.
\end{enumerate}
Here, $\left\| \cdot\right\|_{E}$ denotes the Euclidean norm on $\mathbb{R}^{K}$. Lemma 4.1 is an easy consequence of these assumptions and ensures that we are indeed estimating the target quantities in (\ref{eq:17}). 

\begin{lem} \label{Fisher-cons}
Let $\beta_{{\scriptscriptstyle \text{RFPCR}}} \left(F, \mathbb{P}_X \right)(t)$ be defined according to (\ref{eq:29}), assume that $(C1)-(C4)$ hold and further that $\beta$ lies in the linear subspace spanned by $\left\{v_j \right\}_{j=1}^K$. Then $\beta_{{\scriptscriptstyle \text{RFPCR}}} \left(F, \mathbb{P}_X \right)$ is Fisher-consistent, i.e. $\beta_{{\scriptscriptstyle \text{RFPCR}}} \left(F, \mathbb{P}_X \right)(t) =  \beta(t)$ for $t \in [0,1]$. 
\end{lem}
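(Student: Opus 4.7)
The plan is to decompose the claim into three sub-claims matching the three ingredients of the RFPCR functional: the functional M-location returns $\mu$, the projection-pursuit eigenfunctions based on $Q$ return the usual eigenfunctions $\{v_j\}_{j=1}^K$, and the MM-regression functional returns the scores $\{\langle \beta, v_j\rangle\}_{j=1}^K$. Once all three Fisher-consistencies are in place, the conclusion follows from (\ref{eq:29}) and the assumption that $\beta$ lies in $\mathrm{span}\{v_j\}_{j=1}^K$.

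First, under (C1)--(C2) the distribution of $X-\mu$ is rotationally symmetric within the subspace $V = \mathrm{span}\{v_j\}_{j=1}^K$ and vanishes outside it, so in particular $X-\mu$ and $-(X-\mu)$ are equal in law. Combined with (C3)-type properties of the Huber $\rho$, the Fisher-consistency result of \cite{sinova2018m} for functional M-location applies directly and gives $\mu(\mathbb{P}_X) = \mu$.

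Second, for any unit $v\in L^2[0,1]$ decompose $v = \sum_{j=1}^K a_j v_j + v^{\perp}$ with $v^{\perp}\perp V$, so that by (C1)
\begin{equation*}
\langle v, X-\mu\rangle = \sum_{j=1}^K a_j \sqrt{\lambda_j}\, Z_j .
\end{equation*}
By the spherical symmetry in (C2), $\sum_{j} a_j\sqrt{\lambda_j}Z_j \stackrel{d}{=} \bigl(\sum_j a_j^2 \lambda_j\bigr)^{1/2} Z_1$, and scale-equivariance of $Q$ yields $Q(\langle v, X\rangle) = \bigl(\sum_j a_j^2 \lambda_j\bigr)^{1/2} Q(Z_1)$. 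Maximizing under the constraint $\sum_j a_j^2 + \|v^{\perp}\|^2 = 1$ and using $\lambda_1>\lambda_2>\ldots>\lambda_K>0$ forces $a_1=\pm1$ and $v^{\perp}=0$, giving $v_1(\mathbb{P}_X) = v_1$ up to sign. An induction on $j$, with orthogonality constraints restricting the optimisation to the remaining $(K-j+1)$-dimensional subspace, yields $v_j(\mathbb{P}_X) = v_j$ for every $j$.

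Third, since $\beta\in V$ we may write $\beta = \sum_{j=1}^K \beta_{1j} v_j$, and model (\ref{eq:1}) becomes
\begin{equation*}
Y = \alpha + \langle\beta, \mu\rangle + \sum_{j=1}^K \beta_{1j}\langle X-\mu, v_j\rangle + \epsilon .
\end{equation*}
This is a finite-dimensional linear regression with intercept $\alpha + \langle\beta, \mu\rangle$ and slopes $(\beta_{11},\ldots,\beta_{1K})$, whose predictors are the projections of $X-\mu$ onto the true eigenfunctions. By (C2) these predictors have a jointly symmetric (indeed spherical after rescaling) distribution; by (C3) the error density is even and unimodal; $\epsilon$ is independent of $X$ by the model assumptions. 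These are exactly the conditions under which the MM functional of \cite{yohai1987high} is Fisher-consistent, so $\beta_{1j}(F,G) = \beta_{1j}$ for every $j$.

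Substituting in (\ref{eq:29}) gives $\beta_{\scriptscriptstyle\mathrm{RFPCR}}(F,\mathbb{P}_X)(t) = \sum_{j=1}^K \beta_{1j} v_j(t) = \beta(t)$. The delicate step is the projection-pursuit argument: the combination of sphericity in (C2), scale-equivariance of $Q$, and the strict spectral gap in (C1) is what allows the $Q$-functional to single out the correct eigen-directions, and without any one of these the argument collapses. The M-location and MM-regression parts then reduce to invoking existing Fisher-consistency theorems under the symmetry conferred by (C2) and (C3).
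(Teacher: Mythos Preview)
Your proposal is correct and follows essentially the same approach as the paper. The paper does not give a formal proof of this lemma; it states that the result is ``an easy consequence'' of (C1)--(C3) and, in the discussion that follows, simply invokes the existing Fisher-consistency results for the three building blocks: \cite{bali2011robust} for the projection-pursuit eigenfunctions under the elliptical assumption (C2), and the standard MM theory under (C1) and (C3) for the regression part, with (C3) guaranteeing uniqueness of the population minimizer. Your three-step decomposition is precisely this, and your explicit sphericity/scale-equivariance argument for the $Q$-based directions is the argument that \cite{bali2011robust} supplies and that the paper merely cites.
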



The assumption that $X$ is finite-dimensional seems restrictive and hard to justify. In view of (\ref{eq:5}) though, (C1) comes with little loss of generality and most square integrable processes can be captured sufficiently with a finite number of eigenfunctions. Even if the process $X$ is not finite-dimensional the truncation of its series at some $K$ leads to a mean-squared  error of approximation in the conditional mean that can be bounded by
\begin{equation}
\label{eq:31}
\mathbb{E}[ \mathbb{E}(Y|X(t), t \in [0,1])- \beta_0 - \sum_{j=1}^K \beta_{1j} \int_{0}^1 \left(X(t)-\mu(t)\right) v_j(t) dt ]^2 \leq ||\beta||^2 \sum_{j=K+1}^\infty \lambda_j,
\end{equation}
with $\sum_{j=K+1}^\infty \lambda_j \to 0$ as $K \to \infty$, as shown in the Appendix. This implies that the error is minimal if $K$ is chosen large enough or if $\beta$ is small in norm. 

The practical relevance of assumption (C1) is that it ensures a finite-dimensional coefficient vector, which in conjunction with (C3) implies Fisher-consistency of MM-estimators. Fisher-consistency of the eigenfunctions has been proven more generally under a condition on the distribution of the stochastic process $X$ and irrespective of its dimension \citep{bali2011robust}. This condition is that $X$ should be an elliptically distributed Hilbertarian random variable, according to the definition given in \citet{bali2009principal}. This concept is an extension of the multivariate definition of elliptically distributed random variables as non-degenerate affine transformations of spherical random variables \citep[see][Chapter~6]{maronna2006robust}. Under a finite-dimensional Karhunen-Loève expansion this definition is equivalent to (C2) and this is satisfied, for example, in the case of a Wiener process where the random variables  $\left\{Z_j \right\}_{j=1}^K$ are not only uncorrelated but also independent.

Finally, condition (C3) is common in robust estimation of the linear model and is satisfied for all commonly encountered error terms, for example, Gaussian errors. The importance of this assumption is that it ensures that the minimum of $\mathbb{E}_F [\rho \left( (Y-\mathbf{x}^{\top}\boldsymbol{\beta})/\sigma \right) ]$
is unique. Ordinarily, for this to hold true it is also required that the predictors are not concentrated in any subspace \citep{yohaitec}, but this is not an issue here since we are dealing with projected observations in orthogonal directions and therefore $\sup _{\boldsymbol{\theta} \in \mathbb{R}^{K+1}}\mathbb{P}_G \left(\boldsymbol{\theta}^{\top} \mathbf{x} = 0 \right)<1$, as required.

\subsection{Convergence in probability}

We now show that the functional principal component regression estimators $\widehat{\beta}_{{\scriptscriptstyle \text{RFPCR}}}$ and $\widehat{\beta}_{{\scriptscriptstyle \text{RFPCPR}}}$ are consistent in the $L^2$ sense. That is,
\begin{equation}
\int_{0}^1 \left( \widehat{\beta}_{{\scriptscriptstyle \text{RFPCR}}}(t) - \beta(t) \right)^2 dt \xrightarrow{P} 0 \quad \text{and} \quad  \int_{0}^1 \left( \widehat{\beta}_{{\scriptscriptstyle \text{RFPCPR}}}(t) - \beta(t) \right)^2 dt \xrightarrow{P} 0,
\end{equation}
or, in short, $|| \widehat{\beta}_{{\scriptscriptstyle \text{RFPCR}}} - \beta\|^2 \xrightarrow{P} 0 $ and $|| \widehat{\beta}_{{\scriptscriptstyle \text{RFPCPR}}} - \beta||^2 \xrightarrow{P} 0 $ with $||\cdot ||$ denoting the $L^2$ norm. This is a natural mode of convergence to consider in the present setting as the complete and separable space $L^2[0,1]$ comes with its own (semi-)metric. $L^2$ convergence does not in general imply pointwise convergence to the $L^2$ limit but a well-known result asserts that there does exist a pointwise almost-everywhere convergent subsequence. 

To facilitate the proofs we require the following additional assumption

\begin{itemize}
\item[(C4)] The process X has finite fourth moments, i.e. $\mathbb{E} \left\|X \right\|^4 < \infty$.
\end{itemize}
In view of assumption (C1), (C4) will be true if, and only if, the random variables $\left\{Z_j \right\}_{j=1}^K$ have finite fourth moments. Condition (C4) is common in the asymptotics of functional principal components, see, e.g., \citet[Chapter ~2]{horvath2012inference} and \cite{hall2007methodology}, and is used to bound lower moments. We start with the following auxiliary lemma concerning the asymptotic behaviour of the estimated M-scale $\widehat{\sigma}_n$ and the initial S-estimator $\boldsymbol{\widehat{\beta}}^{in}$, as defined in (\ref{eq:13})-(\ref{eq:14}).

\begin{lem}
Assume that conditions (C1)-(C4) hold. Call $\widehat{\boldsymbol{\beta}}^{in}$ the initial S-estimator derived from the dataset $ (\mathbf{\widehat{X}}, \mathbf{Y})$ and $\widehat{\sigma}_n$ its associated scale. Then $\widehat{\boldsymbol{\beta}}^{in} \xrightarrow{P} \boldsymbol{\beta}$ and $\widehat{\sigma}_n \xrightarrow{P} \sigma$.  
\end{lem}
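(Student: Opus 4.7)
The plan is to prove consistency by a two-step reduction: first handle an ``oracle'' S-estimator computed from the population projections, then transfer the result to the feasible estimator using consistency of the first-stage location and eigenfunction estimators.

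Under (C1) the model reduces to a standard linear regression in $K+1$ parameters: letting $\mathbf{x}_i := (1, \langle X_i - \mu, v_1\rangle, \ldots, \langle X_i - \mu, v_K\rangle)^{\top}$ and $\boldsymbol{\beta} := (\langle \beta, \mu\rangle, \langle \beta, v_1\rangle, \ldots, \langle \beta, v_K\rangle)^{\top}$, one has $Y_i = \mathbf{x}_i^{\top}\boldsymbol{\beta} + \epsilon_i$ with i.i.d.\ $(\mathbf{x}_i, Y_i)$. I would first analyse the oracle S-estimator $\boldsymbol{\beta}^{in,*}$ and its scale $\sigma_n^{*}$ computed from these true projections. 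Classical consistency theory for regression S-estimators (Davies, 1990; Rousseeuw and Yohai, 1984) applies: the bisquare $\rho_0$ is bounded, smooth and even; condition (C3) ensures the population criterion $\mathbb{E}_F[\rho_0((Y-\mathbf{x}^{\top}\boldsymbol{\theta})/s)]$ has a unique minimum at $(\boldsymbol{\beta},\sigma)$; condition (C2) with absolutely continuous scores rules out concentration of $\mathbf{x}$ in any proper subspace; and (C4) provides the required moment control. Hence $\boldsymbol{\beta}^{in,*} \xrightarrow{P} \boldsymbol{\beta}$ and $\sigma_n^{*} \xrightarrow{P} \sigma$.

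The bridge to the feasible estimator rests on plugging in the first-stage estimators. Using the $L^2$-consistency of the functional M-estimator of location, $\|\widehat{\mu}-\mu\|\xrightarrow{P} 0$ \citep{sinova2018m}, and the consistency of the projection-pursuit eigenfunctions, $\|\widehat{v}_j - v_j\|\xrightarrow{P} 0$ for $j=1,\ldots,K$ \citep{bali2011robust} (after the usual sign alignment, which is transparent to the regression since it is absorbed into the coefficient), I would control the gap $\widehat{\mathbf{x}}_i - \mathbf{x}_i$ via
\begin{equation*}
\langle X_i-\widehat{\mu}, \widehat{v}_j\rangle - \langle X_i-\mu, v_j\rangle = \langle X_i-\mu, \widehat{v}_j-v_j\rangle + \langle \mu-\widehat{\mu}, \widehat{v}_j\rangle,
\end{equation*}
so that Cauchy--Schwarz together with $\|\widehat{v}_j\|=1$ gives $|\widehat{x}_{ij}-x_{ij}| \leq \|X_i-\mu\|\,\|\widehat{v}_j-v_j\| + \|\widehat{\mu}-\mu\|$. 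By (C4), $n^{-1}\sum_i\|X_i-\mu\|^2 = O_p(1)$, whence $n^{-1}\sum_i\|\widehat{\mathbf{x}}_i - \mathbf{x}_i\|^2 \xrightarrow{P} 0$.

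The final step, and the main obstacle, is a perturbation argument transferring consistency of the oracle S-pair $(\boldsymbol{\beta}^{in,*},\sigma_n^{*})$ to the feasible pair $(\widehat{\boldsymbol{\beta}}^{in},\widehat{\sigma}_n)$. For any fixed $\boldsymbol{\theta}$ in a bounded set, the implicit M-scale is Lipschitz in its residuals because $\rho_0$ has bounded derivative, so the estimated residual perturbation obtained above yields $|\widehat{\sigma}_n(\boldsymbol{\theta}) - \sigma_n^{*}(\boldsymbol{\theta})| = o_p(1)$ pointwise. Upgrading this to uniform convergence over a compact neighbourhood of $\boldsymbol{\beta}$ is the delicate point and would be handled by a stochastic equicontinuity argument exploiting the boundedness and Lipschitz continuity of $\rho_0$ together with the moment bound in (C4). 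Once uniformity is established, a standard argmin continuity argument applied to the S-criterion, together with uniqueness of the population minimum under (C2)--(C3), gives $\widehat{\boldsymbol{\beta}}^{in} - \boldsymbol{\beta}^{in,*} = o_p(1)$ and $\widehat{\sigma}_n - \sigma_n^{*} = o_p(1)$, so that $\widehat{\boldsymbol{\beta}}^{in} \xrightarrow{P}\boldsymbol{\beta}$ and $\widehat{\sigma}_n \xrightarrow{P}\sigma$ by the triangle inequality with the oracle convergence.
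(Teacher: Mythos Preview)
Your proposal is sound and shares the paper's oracle-plus-perturbation architecture, including the identical Cauchy--Schwarz bound on $\widehat{\mathbf{x}}_i-\mathbf{x}_i$, but the perturbation step is carried out by a genuinely different technique. The paper does not use uniform convergence of the criterion. Instead it rewrites the S-estimator as a fixed point of a reweighted-least-squares map $\mathbf{f}(\sigma,\boldsymbol{\beta})$ and applies the integral Mean-Value Theorem for vector-valued functions to obtain the explicit identity
\[
\begin{pmatrix}\widehat{\sigma}-\widetilde{\sigma}\\ \widehat{\boldsymbol{\beta}}-\widetilde{\boldsymbol{\beta}}\end{pmatrix}
=\left[\int_0^1\bigl(\mathbf{I}-\nabla\mathbf{f}(\cdot)\bigr)\,dt\right]^{-1}
\Bigl[\mathbf{f}(\widetilde{\sigma},\widetilde{\boldsymbol{\beta}})-(\widetilde{\sigma},\widetilde{\boldsymbol{\beta}}^{\top})^{\top}\Bigr],
\]
and then shows the second factor is $o_P(1)$ and the first $O_P(1)$ by direct computation exploiting boundedness and Lipschitz continuity of $\rho_0$, $\rho_0'$ and $\rho_0'(x)/x$. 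Your argmin-continuity route via stochastic equicontinuity of the M-scale is more classical and also works, but you have elided one ingredient: before you may restrict attention to a compact neighbourhood of $\boldsymbol{\beta}$ you need tightness of the feasible minimiser $\widehat{\boldsymbol{\beta}}^{in}$ itself, which is not automatic and typically requires a separate argument (e.g.\ that the S-scale criterion diverges at infinity uniformly over small design perturbations, or a breakdown-type bound). The paper's estimating-equation approach sidesteps any uniform-in-$\boldsymbol{\theta}$ argument at the price of needing $\rho_0$ smooth enough for the Jacobian calculation; your approach is conceptually cleaner but the tightness step should be made explicit.
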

With this lemma we can now prove the main result of this section. For the asymptotics of the eigenfunctions it is important to realize that sample eigenfunctions are only defined up to the usual sign ambiguity. In the following proposition we tacitly assume that the sign has been chosen correctly.

\begin{prop} Assume that conditions (C1)-(C4) are satisfied. Then $|| \widehat{\beta}_{{\scriptscriptstyle \text{RFPCR}}} - \beta || \xrightarrow{P} 0$.
\end{prop}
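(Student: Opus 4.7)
Since $\beta$ lies in the span of $\{v_j\}_{j=1}^K$ by (C1) together with the additional assumption carried over from Lemma \ref{Fisher-cons}, we have $\beta(t)=\sum_{j=1}^K \beta_{1j} v_j(t)$ with $\beta_{1j}=\langle\beta,v_j\rangle$. Using the triangle inequality and $\|\widehat{v}_j\|=\|v_j\|=1$, I would first reduce the target to
\[
\|\widehat{\beta}_{{\scriptscriptstyle \text{RFPCR}}}-\beta\|\;\le\; \sum_{j=1}^K |\widehat{\beta}_{1j}-\beta_{1j}|\;+\;\sum_{j=1}^K |\beta_{1j}|\,\|\widehat{v}_j-v_j\|.
\]
Since $K$ is fixed under (C1), it therefore suffices to establish (a) $\|\widehat{v}_j-v_j\|\xrightarrow{P}0$ and (b) $|\widehat{\beta}_{1j}-\beta_{1j}|\xrightarrow{P}0$ for each $j=1,\ldots,K$, choosing the sign of $\widehat{v}_j$ appropriately.

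For (a), I would invoke the consistency result for projection-pursuit functional PCA of \citet{bali2011robust}, whose hypotheses are covered by (C1), (C2) and (C4): the $Q_n$ scale is Fisher-consistent and continuous under ellipticity, the eigenvalues are distinct by (C1), and finite fourth moments give the usual perturbation control. Consistency of the M-estimator of location $\widehat{\mu}\xrightarrow{P}\mu$ in $L^2$ follows analogously from \citet{sinova2018m} under the same moment condition.

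For (b), the key complication is that the regressors $\widehat{x}_{ij}=\langle X_i-\widehat{\mu},\widehat{v}_j\rangle$ themselves depend on the plug-in estimators of $\mu$ and $v_j$, so $\{(Y_i,\widehat{\mathbf{x}}_i)\}$ is not an i.i.d.\ sample in the usual sense. My strategy would be a standard two-stage argument. First, Lemma 4.2 gives a consistent initial S-estimator $\widehat{\boldsymbol{\beta}}^{in}$ and a consistent residual scale $\widehat{\sigma}_n\xrightarrow{P}\sigma$, which together with (C3) guarantee that the MM estimating equation (\ref{eq:13}) is solved in a neighbourhood of the Fisher-consistent target identified in Lemma \ref{Fisher-cons}. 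Second, I would decompose
\[
\widehat{x}_{ij}-x_{ij}\;=\;-\,\langle \widehat{\mu}-\mu,\widehat{v}_j\rangle\;+\;\langle X_i-\mu,\widehat{v}_j-v_j\rangle,\qquad x_{ij}=\langle X_i-\mu,v_j\rangle,
\]
and use Cauchy--Schwarz with (C4) and (a) to show that $\max_{i\le n}|\widehat{x}_{ij}-x_{ij}|$ and the empirical cross-moments $n^{-1}\sum_i |\widehat{x}_{ij}-x_{ij}|^2$ are $o_P(1)$. Combined with the Lipschitz continuity and boundedness of $\rho_1'$ in the bisquare family, this transfers the uniform convergence of the empirical criterion from the ``oracle'' regressors $(x_{ij})$ (where standard MM consistency, e.g.\ \citet{yohai1987high} and \citet{maronna2006robust}, applies directly thanks to Fisher consistency from Lemma \ref{Fisher-cons}) to the plug-in regressors $(\widehat{x}_{ij})$.

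The main obstacle I anticipate is precisely this last transfer step: controlling the empirical score equation under estimated regressors. I would handle it by a perturbation argument showing that the MM criterion evaluated with $(\widehat{\mathbf{x}}_i)$ converges uniformly on compact sets of $\boldsymbol{\beta}$ to the same population limit as the oracle criterion, so that by the usual argmin/argmax continuity (e.g.\ van der Vaart's consistency lemma) the solution $\widehat{\boldsymbol{\beta}}$ converges to the unique minimizer identified in Lemma \ref{Fisher-cons}. Once (a) and (b) are in hand, substituting into the displayed triangle-inequality bound above completes the proof.
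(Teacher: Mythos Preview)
Your decomposition via the triangle inequality and the handling of the eigenfunction term are exactly what the paper does. The paper's proof of the proposition is very short: it writes the same bound
\[
\|\widehat{\beta}_{{\scriptscriptstyle \text{RFPCR}}}-\beta\|\le \sum_{j=1}^K|\widehat{\beta}_j-\beta_j|+\sum_{j=1}^K|\beta_j|\,\|\widehat{v}_j-v_j\|,
\]
bounds $|\beta_j|$ by Parseval, invokes \citet{bali2011robust} and \citet{sinova2018m} for the second sum, and for the first sum simply says that the machinery of Lemma~4.2 applies verbatim to the MM step (``an S-estimator may be treated as an M-estimator without updating the scale'').

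Where you differ is in the technique for the regression step. The paper does \emph{not} use a uniform-convergence-of-criterion/argmin argument. Instead, Lemma~4.2 writes the S (and by extension MM) estimator as a fixed point $\widehat{\boldsymbol{\beta}}=\mathbf{f}(\widehat{\sigma},\widehat{\boldsymbol{\beta}})$, applies the integral mean-value theorem to compare it with the ``oracle'' estimator $\widetilde{\boldsymbol{\beta}}$ based on the true $(\mu,v_j)$, and shows $\mathbf{f}(\widetilde{\sigma},\widetilde{\boldsymbol{\beta}})-(\widetilde{\sigma},\widetilde{\boldsymbol{\beta}})=o_P(1)$ using only \emph{averaged} bounds of the form $n^{-1}\sum_i\|\widehat{\mathbf{x}}_i-\mathbf{x}_i\|_E$. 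Your route via van der Vaart's consistency lemma is a legitimate alternative and arguably more transparent; the paper's fixed-point route is slicker because it recycles Lemma~4.2 wholesale and later delivers asymptotic normality (Corollary~4.2) from the same expansion.

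One caution: your claim that $\max_{i\le n}|\widehat{x}_{ij}-x_{ij}|=o_P(1)$ is not obviously true. From $|\widehat{x}_{ij}-x_{ij}|\le \|X_i-\mu\|\,\|\widehat{v}_j-v_j\|+\|\widehat{\mu}-\mu\|$ you would need a rate on $\|\widehat{v}_j-v_j\|$ to beat the growth of $\max_i\|X_i-\mu\|$, and (C1)--(C4) only give $o_P(1)$ without a rate. Fortunately the uniform-convergence argument does not actually need the $\max$; averaged control such as $n^{-1}\sum_i\|\widehat{\mathbf{x}}_i-\mathbf{x}_i\|_E=o_P(1)$ (which is exactly what the paper uses) suffices, since $\rho_1'$ is bounded and Lipschitz. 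Drop the $\max$ claim and your argument goes through.
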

Convergence in probability may be strengthened to almost sure convergence under some additional assumptions. The main complication with regard to the asymptotic theory is the fact that after centering and projecting the random variables $\left\{ \langle X_i - \widehat{\mu}, \widehat{v}_j \right  \rangle \}_{i=1}^n$, $j=1, \ldots, K$ can no longer be assumed to be independent. We overcome this problem by comparing the actual estimators with their theoretical counterparts, that is, the estimators that would be obtained if $\left\{ v_j \right\}_{j=1}^K$ and $\mathbb{E}(X)$ were known. We show that the probabilistic distance between them vanishes as the sample size increases. For this, condition (C1) is instrumental as it allows us to deduce that $||\widehat{\mu}-\mathbb{E}(X) || \xrightarrow{P} 0$, instead of just the weak convergence that we would get in infinite-dimensional spaces \citep{gervini2008robust, sinova2018m}.

The result may be readily extended to $\hat{\beta}_{{\scriptscriptstyle \text{RFPCPR}}}$ under the obvious additional condition that the eigenfunctions of $\Gamma$ have finite roughness, as in \citet{silverman1996smoothed}.

\begin{itemize}
\item[(C5)] For the eigenfunctions of the covariance operator $\Gamma$ it holds that $\int_{[0,1]} [v_j^{\prime \prime}(t)]^2 dt < \infty, \ j = 1, \ldots K$.
\end{itemize}
Condition (C5) may be restated as requiring the eigenfunctions to belong to the Sobolev space $W^2_2[0,1]$ of twice differentiable functions with an absolutely continuous first derivative and square integrable second derivative. It is well-known that these functions form a dense subspace of $L^2[0,1]$. Since $\widehat{\lambda}$ is a REML estimator, it holds that $\widehat{\lambda} = O_P(1)$.  Therefore, we have that $\widehat{\lambda} = o_{P}(n)$ and we immediately obtain the following corollary.
\begin{cor}
Assume that conditions (C1)-(C5) are satisfied. Then $|| \widehat{\beta}_{{\scriptscriptstyle \text{RFPCPR}}} - \beta||^2 \xrightarrow{P} 0 $.
\end{cor}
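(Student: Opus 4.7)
The strategy is to show that $\widehat\beta_{{\scriptscriptstyle \text{RFPCPR}}}$ is asymptotically equivalent in $L^2$ to $\widehat\beta_{{\scriptscriptstyle \text{RFPCR}}}$ and then invoke the preceding proposition. Since both coefficient estimators are expansions in the same orthonormal system $\{\widehat v_j\}_{j=1}^K$, Parseval's identity immediately yields
\[
\|\widehat\beta_{{\scriptscriptstyle \text{RFPCPR}}} - \widehat\beta_{{\scriptscriptstyle \text{RFPCR}}}\|^2
= \sum_{j=1}^K \bigl(\widehat\beta_{1j}^{tr}(\lambda) - \widehat\beta_{1j}\bigr)^{\!2}
= \|\widehat{\boldsymbol\beta}_1^{tr}(\lambda) - \widehat{\boldsymbol\beta}_1\|_E^2,
\]
so by the triangle inequality it suffices to show that the right-hand side is $o_P(1)$.

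Write $\mathbf M_n := \widetilde{\mathbf X}^{\top} \widehat{\mathbf W}\, \widetilde{\mathbf X}$. From the definition (\ref{eq:20}) a direct manipulation gives the identity $\widehat{\boldsymbol\beta}_1^{tr}(\lambda) - \widehat{\boldsymbol\beta}_1 = -\lambda (\mathbf M_n + \lambda \mathbf A)^{-1} \mathbf A\, \widehat{\boldsymbol\beta}_1$, and hence the operator-norm bound
\[
\|\widehat{\boldsymbol\beta}_1^{tr}(\lambda) - \widehat{\boldsymbol\beta}_1\|_E
\leq \lambda\, \|(\mathbf M_n + \lambda \mathbf A)^{-1}\|_{op}\, \|\mathbf A\|_{op}\, \|\widehat{\boldsymbol\beta}_1\|_E.
\]
It therefore remains to bound the three random factors on the right. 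By Proposition 4.1 combined with Fisher-consistency (Lemma \ref{Fisher-cons}), $\widehat{\boldsymbol\beta}_1 \xrightarrow{P} (\langle\beta,v_1\rangle,\ldots,\langle\beta,v_K\rangle)^\top$, so $\|\widehat{\boldsymbol\beta}_1\|_E = O_P(1)$. Under (C5) and the $L^2$-consistency of the projection-pursuit eigenfunctions, a Cauchy--Schwarz bound on $|A_{ij}| = |\langle \widehat v_i'',\widehat v_j''\rangle|$ together with the fact that $K$ is fixed gives $\|\mathbf A\|_{op} = O_P(1)$. Finally, the arguments developed in the proof of Proposition 4.1 show that $\mathbf M_n/n$ converges in probability to a positive-definite diagonal matrix whose entries are the population eigenvalues $\lambda_j$ weighted by the nondegenerate limit of the MM weights, so that $\lambda_{\min}(\mathbf M_n + \lambda \mathbf A) \geq \lambda_{\min}(\mathbf M_n)$ is of exact order $n$ in probability (positive semi-definiteness of $\lambda\mathbf A$ is crucial here). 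Consequently $\|(\mathbf M_n + \lambda \mathbf A)^{-1}\|_{op} = O_P(n^{-1})$, and putting everything together yields $\|\widehat{\boldsymbol\beta}_1^{tr}(\lambda) - \widehat{\boldsymbol\beta}_1\|_E = O_P(\lambda/n) = o_P(1)$ thanks to $\widehat\lambda = o_P(n)$.

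The main obstacle in this plan is justifying the bound $\|\widehat v_j''\| = O_P(1)$: mere $L^2$-consistency of the robust eigenfunctions does not transfer to their second derivatives, so one must either restrict the projection-pursuit optimisation to a Sobolev-bounded class (in the spirit of \cite{silverman1996smoothed}) or exploit the fact that, under (C1), the sample eigenfunctions live in a finite-dimensional span of smooth functions whose roughness is automatically controlled. A secondary, more routine issue is to pin down the asymptotic lower bound on $\lambda_{\min}(\mathbf M_n)/n$; this follows from the consistency of $\widehat{\boldsymbol\beta}^{in}$ and $\widehat\sigma_n$ (Lemma 4.2) together with the fact that the weights $\widehat w_i = \rho_1'(u_i/\sigma)/(u_i/\sigma)$ have strictly positive expectation under (C3), so the weighted Gram matrix does not degenerate.
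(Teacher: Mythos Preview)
Your approach is essentially the same as the paper's: both argue that the transformed score vector differs from the untransformed one by a quantity governed by the matrix $(\mathbf M_n+\lambda\mathbf A)^{-1}\mathbf M_n - \mathbf I_K$, which tends to zero because $\mathbf M_n/n$ has a positive-definite limit while $\widehat\lambda = o_P(n)$ (the paper in fact asserts the stronger $\widehat\lambda = O_P(1)$ from the REML construction). The paper compresses this into a single sentence, writing $\widehat{\boldsymbol\beta}_1^{tr}=\mathbf A_n\widehat{\boldsymbol\beta}_1$ with $\mathbf A_n\xrightarrow{P}\mathbf I_K$, whereas you spell out the operator-norm bound explicitly; but the underlying mechanism is identical.

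One remark: the obstacle you flag concerning $\|\widehat v_j''\|=O_P(1)$ is genuine and is not resolved in the paper either---(C5) is stated for the \emph{population} eigenfunctions, while the penalty matrix $\mathbf A$ is built from the \emph{estimated} ones. Your suggested fix via (C1), namely that under a finite Karhunen--Lo\`eve expansion the sample eigenfunctions lie in the fixed $K$-dimensional span of the smooth $v_j$'s, is exactly the right observation to close this gap, and arguably makes the argument more complete than the paper's own sketch.
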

\noindent
This corollary follows in a straightforward manner from proposition 4.1. given that $\widehat{\boldsymbol{\beta}}_1^{tr} = \mathbf{A}_n \widehat{\boldsymbol{\beta}}_1$ and $\mathbf{A}_n \xrightarrow{P}\mathbf{I}_K$. This fact means that under general conditions $\widehat{\beta}_{{\scriptscriptstyle \text{RFPCPR}}}$ is essentially a finite-sample correction to $\hat{\beta}_{{\scriptscriptstyle \text{RFPCR}}}$, whose importance diminishes as the sample size increases. This is a desirable feature of the method as the roughness of the estimated coefficient function is mainly an issue in smaller samples and this is where the smoothness correction is more needed.

Regarding the vector of estimated scores with techniques employed previously we can prove their asymptotic normality as given in the Corollary 4.2. 
\begin{cor}
Consider the vector $\widehat{\boldsymbol{\beta}} := \left( \widehat{\beta_0}, \widehat{\boldsymbol{\beta}}_1^{\top} \right)^{\top}$ from (13). Under (C1)-(C4)
\begin{equation*}
\sqrt{n}(\widehat{\boldsymbol{\beta}} - \boldsymbol{\beta} ) \xrightarrow{D} \mathcal{N}\left(0,  \sigma^2 \frac{\mathbb{E}_F \left( \rho_{1}^{\prime}(\epsilon/ \sigma)^2 \right)}{(\mathbb{E}_F \left( \rho_1^{\prime \prime}(\epsilon/\sigma))^2 \right) } \diag\left(1, \lambda_1^{-1}, \ldots \lambda_K^{-1} \right) \right)
\end{equation*}
\end{cor}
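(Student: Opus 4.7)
The plan is to piggy-back on the consistency argument already outlined for Proposition 4.1 and then invoke the classical asymptotic distribution theory for MM-estimators applied to the \emph{oracle} predictors in which $\widehat{\mu}$ and $\widehat{v}_j$ are replaced by their population counterparts $\mu$ and $v_j$. First I would introduce the oracle design vectors $\mathbf{x}_i^{\ast}:=(1,\langle X_i-\mu,v_1\rangle,\ldots,\langle X_i-\mu,v_K\rangle)^{\top}$, which under (C1) satisfy $Y_i=\mathbf{x}_i^{\ast\top}\boldsymbol{\beta}+\epsilon_i$ with $\boldsymbol{\beta}=(\alpha+\langle\beta,\mu\rangle,\langle\beta,v_1\rangle,\ldots,\langle\beta,v_K\rangle)^{\top}$ (by Lemma~\ref{Fisher-cons}). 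Let $\bar{\boldsymbol{\beta}}$ be the MM-estimator computed from the i.i.d.\ sample $(\mathbf{x}_i^{\ast},Y_i)$ with the same $\rho_0,\rho_1$.

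For $\bar{\boldsymbol{\beta}}$ the vectors $\{(\mathbf{x}_i^{\ast},Y_i)\}_{i=1}^n$ are i.i.d., (C4) guarantees the required second moments, (C2) implies that $\mathbb{E}(\mathbf{x}_1^{\ast}\mathbf{x}_1^{\ast\top})=\diag(1,\lambda_1,\ldots,\lambda_K)$ is non-singular and elliptical, and (C3) ensures that the $M$-functional is well-defined and Fisher consistent. Under these hypotheses the standard asymptotic theory of MM-estimators \citep[e.g.][]{yohai1987high, maronna2006robust} yields
\begin{equation*}
\sqrt{n}(\bar{\boldsymbol{\beta}}-\boldsymbol{\beta})\xrightarrow{D}\mathcal{N}\!\left(0,\sigma^2\frac{\mathbb{E}_F[\rho_1'(\epsilon/\sigma)^2]}{(\mathbb{E}_F[\rho_1''(\epsilon/\sigma)])^{2}}\,\diag(1,\lambda_1^{-1},\ldots,\lambda_K^{-1})\right),
\end{equation*}
which is exactly the stated limiting distribution.

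It therefore remains to prove $\sqrt{n}(\widehat{\boldsymbol{\beta}}-\bar{\boldsymbol{\beta}})\xrightarrow{P}0$. Writing the MM score equation \eqref{eq:13} for both estimators, subtracting and Taylor-expanding $\rho_1'$ around $\epsilon_i/\sigma$, one gets two kinds of terms: (i) a linear-in-$(\widehat{\boldsymbol{\beta}}-\bar{\boldsymbol{\beta}})$ Hessian block whose sample version converges in probability to $\mathbb{E}_F[\rho_1''(\epsilon/\sigma)]\,\diag(1,\lambda_1,\ldots,\lambda_K)/\sigma$, which is invertible by (C2)--(C3); and (ii) remainder terms depending on $\widehat{x}_i-\mathbf{x}_i^{\ast}$ and on $\widehat{\sigma}_n-\sigma$. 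For (ii), using the identity
\begin{equation*}
\widehat{x}_{ij}-x_{ij}^{\ast}=\langle\mu-\widehat{\mu},\widehat{v}_j\rangle+\langle X_i-\mu,\widehat{v}_j-v_j\rangle,
\end{equation*}
together with the $\sqrt{n}$-rates of $\widehat{\mu}$ from \citet{sinova2018m} and of the projection-pursuit eigenfunctions $\widehat{v}_j$ from \citet{bali2011robust}, which are both available under (C1)--(C4) because $X$ is finite-dimensional, and Lemma~4.1 giving $\widehat{\sigma}_n-\sigma=O_P(n^{-1/2})$, one can bound these remainder terms by $o_P(n^{-1/2})$ after invoking (C4) to control cross moments such as $n^{-1}\sum_i\psi(\epsilon_i/\sigma)\|X_i\|$. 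Solving the resulting linear system gives $\sqrt{n}(\widehat{\boldsymbol{\beta}}-\bar{\boldsymbol{\beta}})=o_P(1)$, and Slutsky's lemma concludes the proof.

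The main obstacle is Step~(ii): establishing that the plug-in of $\widehat{\mu}$ and $\widehat{v}_j$ is \emph{asymptotically negligible} at the $\sqrt{n}$-scale. This requires root-$n$ consistency in $L^2$ of the robust mean and eigenfunctions, which is non-trivial in general functional settings but is available here because (C1) reduces the problem to a fixed $K$-dimensional one, so existing finite-dimensional MM/projection-pursuit rates transfer directly. Once this rate is in hand, the rest is a standard Taylor-expansion argument analogous to the one used for the consistency proof of Proposition~4.1.
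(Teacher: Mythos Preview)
Your high-level strategy coincides with the paper's: introduce the oracle (``theoretical'') MM-estimator $\widetilde{\boldsymbol{\beta}}$ based on the true $\mu,v_1,\ldots,v_K$, appeal to standard MM asymptotics for $\sqrt{n}(\widetilde{\boldsymbol{\beta}}-\boldsymbol{\beta})$, and then show $\sqrt{n}(\widehat{\boldsymbol{\beta}}-\widetilde{\boldsymbol{\beta}})=o_P(1)$ so that Slutsky applies. The paper carries out the last step via the fixed-point representation \eqref{eq:43} established in Lemma~4.2, writing $\sqrt{n}(\widehat{\boldsymbol{\beta}}-\widetilde{\boldsymbol{\beta}})=\mathbf{A}_n\,\sqrt{n}\bigl(\mathbf{f}(\widetilde{\boldsymbol{\beta}})-\widetilde{\boldsymbol{\beta}}\bigr)$ with $\mathbf{A}_n=O_P(1)$, whereas you Taylor-expand the score equations directly. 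These are essentially two packagings of the same linearisation.

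The substantive gap is in your step~(ii). You write that the $\sqrt{n}$-rates of the projection-pursuit eigenfunctions $\widehat{v}_j$ are ``available'' from \citet{bali2011robust}, but that reference gives only consistency (their Theorem~4.2, which is what the paper invokes in the proof of Lemma~4.2). The paper itself is explicit on this point: in the paragraph immediately following the corollary it states that $\|\widehat{v}_j-v_j\|=O_P(n^{-1/2})$ ``has been established in the multivariate setting'' but that ``for the functional setting to the best of our knowledge a formal proof is still lacking''. Your remark that (C1) reduces matters to a fixed $K$-dimensional problem is heuristically appealing, but the projection-pursuit optimisation is still carried out over the unit sphere of $L^2[0,1]$, so the multivariate rate does not transfer automatically; you would need an additional argument (or an explicit citation) to close this. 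The paper's proof sidesteps this by working through the fixed-point identity and claiming that only \emph{consistency} of $\widehat{\mu}$ and $\widehat{v}_j$, combined with a CLT bound on terms of the form $n^{-1/2}\sum_i|y_i|\|X_i-\mathbb{E}(X)\|$, suffices for $\sqrt{n}\bigl(\mathbf{f}(\widetilde{\boldsymbol{\beta}})-\widetilde{\boldsymbol{\beta}}\bigr)=o_P(1)$. If you wish to stay with your Taylor-expansion route, you must either supply the missing $\sqrt{n}$-rate argument for $\widehat{v}_j$ under (C1)--(C4), or restructure the remainder analysis so that only $o_P(1)$ behaviour of $\widehat{v}_j-v_j$ and $\widehat{\mu}-\mu$ is required.
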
 
\noindent
The corollary indicates that under the previous conditions the scores are estimated independently asymptotically. The corollary also suggests that $\sqrt{n} \left\| \hat{\beta}_{{\scriptscriptstyle \text{RFPCR}}} - \beta \right\| = O_P(1)$ so that the estimator converges at a high rate. However, for this to hold we further need to show that $\left\|\widehat{v}_j - v_j \right\| = O_P(1/\sqrt{n})$ for $j=1, \ldots K$. A stronger notion of differentiability is needed for this. While this has been established in the multivariate setting, for the functional setting to the best of our knowledge a formal proof is still lacking. Nevertheless, we conjecture that the result holds under appropriate regularity conditions.
	
\section{Influence function}

Perhaps the most popular tool in robust statistics is the influence function, introduced by \cite{hampel1974influence}. The influence function seeks to quantify the effect of infinitesimal contamination on the functional $T(F)$ corresponding to an estimator. It is defined as
\begin{equation}
\IF\left(\mathbf{x}, T, F\right) = \lim_{t \to 0^{+} } \frac{T\left(F_t\right)-T(F)}{t} = \frac{\partial}{\partial t} T\left(F_t \right) \Big\rvert_{t = 0},
\end{equation}
where $F(t) = \left(1-t\right)F + t \delta_\mathbf{x}$ is the contaminated distribution with point-mass contamination at $\mathbf{x} \in \mathbb{R}^p$. The influence function is the Gateaux derivative of the functional $T$ defined on the space of finite signed measures in the direction $\delta_x - F$ and as such exists under very general conditions \citep{hampel2011robust}. Under some regularity conditions the influence function may be used to compute the asymptotic variance of the estimator and it is also useful for diagnostic purposes since it is asymptotically equivalent to the jackknife.

From the point of view of robustness, estimators with unbounded influence function are not desirable as small contamination can lead to large distortions of the estimates. Therefore, it is important to examine whether the estimator introduced in the previous sections possesses a bounded influence function and if that is not the case, which directions of contamination are the most harmful. Since the RFPCR estimator in (\ref{eq:29}) is composed of three parts, namely the functional M-estimator of location, the functional projection-pursuit principal components and the regression MM-estimator, it is intuitively clear that its influence function will be a combination of these influence functions.

To make the notation more tractable we shall from now on assume that $\mathbb{E}(X)(t) = 0$. In case of Fisher-consistency, i.e. under (C1)-(C3), a standard calculation shows that the influence function of $\beta_{{\scriptscriptstyle \text{RFPCR}}}$ at the product measure $F \times \mathbb{P}_X$  is given by
\begin{align}
\IF\left( \left(x, y \right), \beta_{{\scriptscriptstyle \text{RFPCR}}}, \left(F,\mathbb{P}_X \right) \right) (t)  = &  \sum_{j=1}^{K} \IF\left( \left( \langle x, v_j \rangle, y \right), \beta_{1j}, \left(F, \mathbb{P}_X\right) \right) v_j  (t)  +  \sum_{j=1}^K \beta_{1j} \IF\left(x, v_j, \mathbb{P}_X \right) (t),
\end{align}
where $x$ is a point in the functional predictor space and y a point in the scalar response space. The first term on the right indeed consists of the influence function of the regression MM-estimators evaluated at the (contaminated) scores $\left\{ \langle x, v_j \rangle \right\}_{j=1}^K$ which act as regressors, while the second term is just a linear combination of the influence functions of the projection-pursuit eigenfunctions.

Denote the vector of scores $\left\{ \langle x, v_j \rangle \right\}_{j=1}^K$ by $\mathbf{x}_0$ and let $\mathbf{x} := \left(1, \mathbf{x}_0^{\top} \right)^{\top}$. The influence function of MM-estimators for symmetric error distributions can easily be derived and is given by
\begin{equation}
\label{eq:33}
\IF\left( \left(\mathbf{x}, y \right), \left(\beta_0,\boldsymbol{\beta}_1 \right), \left(F, \mathbb{P}_X \right) \right) = \frac{\sigma}{d} \rho_1^{\prime} \left(\frac{y-\beta_0-\mathbf{x}_0^{\top} \boldsymbol{\beta}_1}{\sigma} \right) \left(\mathbb{E}_G \mathbf{x} \mathbf{x}^{\top} \right)^{-1} \mathbf{x},
\end{equation}
with $d = \mathbb{E}_F (\rho_{1}^{\prime \prime} \left(\epsilon/\sigma \right))$, see \cite{maronna2006robust}. Clearly, the influence function is unbounded in $\mathbf{x}$. Hence, leverage points, i.e. large scores on the eigenfunctions can have a large effect on the estimation. However, since $\rho_{1}$ is bounded and smooth and therefore $\rho_{1}^{\prime} \to 0 $ as $|x|\to \infty $, only good leverage points may have an effect on the estimators.  Note that $\mathbb{E}_G (\mathbf{x} \mathbf{x}^{\prime})$ is diagonal with entries given by
\begin{equation}
\mathbb{E}_G  \left(\langle X, v_j \rangle \langle X, v_i \rangle \right)= \mathbb{E}_G \left( \langle \langle X, v_j \rangle X, v_i \rangle \right) = \langle \Gamma v_j, v_i \rangle = \lambda_j \delta_{ij},
\end{equation}
for $i,j >1$, where $\delta_{ij}$ is $1$ if $i=j$ and $0$ otherwise. The first diagonal element of $\mathbb{E}_G  ( \mathbf{x} \mathbf{x}^{\top} )$ is equal to $1$ while the other entries of the first row and column are zero because $X$ is centered. This implies that good leverage points in directions with small spread (small eigenvalues) have the strongest effect on the estimator.

The influence function of functional principal components based on projection pursuit was studied in \cite{bali2015influence}, who derived the influence functions of the eigenvalues and the eigenfunctions. In particular, the  IF of the $k$th eigenfunction is given by
\small
\begin{align}
\IF\left(x, v_k, \mathbb{P}_X \right)(t)  =  \sum_{j=1}^{k-1} \frac{\sqrt{\lambda}_j}{\lambda_k - \lambda_j} \IF^{\prime}\left( \frac{\langle x, v_j \rangle}{\sqrt{\lambda}_j}, Q, G \right) \langle x, v_k \rangle v_j(t) + \sum_{j\geq k+1} \frac{\sqrt{\lambda}_k}{\lambda_k - \lambda_j}\IF^{\prime}\left( \frac{\langle x, v_k \rangle}{\sqrt{\lambda}_k}, Q, G \right)  \langle x, v_j \rangle v_j(t),
\end{align}
\normalsize
and naturally depends on the influence function of the scale functional Q. For a distribution function $F$ with corresponding density $f$ the influence function of the Q functional is given by
\begin{equation}
\IF\left(x, Q, F \right) = d\frac{1/4-F(x+d^{-1}) + F(x-d^{-1})}{\int f(y+d^{-1})f(y) dy},
\end{equation}
with $d$ a calibration constant that makes the estimator Fisher-consistent at a given $F$, \citep{rousseeuw1993alternatives}. A useful property of the influence function of the functional $Q$  (and of many other robust scale functionals) is that its derivative tends to zero as $|x| \to \infty$, indicating a redescending effect for large outliers. 

With this in mind, the IF of the $k$th eigenfunction is seen to be unbounded but only for small scores on some eigenfunctions and simultaneously large scores on others. In particular, as \cite{bali2015influence} remark, observations with large absolute values of $\langle x, v_j \rangle$ in combination with small absolute values of $\langle x, v_k \rangle$ for $k<j$ 
may exert significant influence on the eigenfunctions.  The directions of unboundedness for the regression estimators and the  eigenfunctions are not necessarily identical but in practice they are very similar as more often than not both estimators are vulnerable to disproportionately large scores on certain coordinates. 

To compare the influence functions of the non-robust FPCR  and robust RFPCR estimators we consider the following example. Let $X(t) = \sqrt{\lambda_1} Z_1 v_1(t) + \sqrt{\lambda_2} Z_2 v_2(t)$ with $Z_i \sim N(0,1)$,  $v_i(t) = \sqrt{2} \sin \left( \left(i-0.5 \right) \pi t \right)$, and $\lambda_i = \pi^{-2} \left(i-0.5\right)^{-2}$ and consider $x$-values given by $ x := x_1 v_1(t) + x_2 v_2(t)$ and $y$ varying freely. The squared norm of the influence function of both estimators are plotted in Figure 1. The non-robust estimator is asymptotically equivalent to the FPC estimator of \cite{cardot1999functional}, discussed in Section 2. Therefore, these two estimators share the same influence function.
\begin{figure}[H]
\centering
\subfloat{\includegraphics[width = 0.45\linewidth]{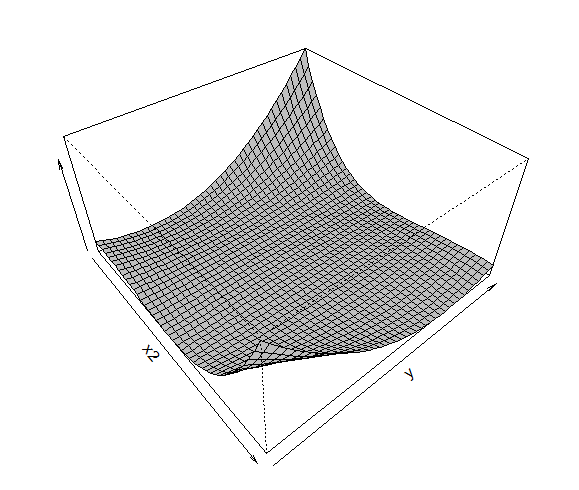}} \
\subfloat{\includegraphics[width = 0.45\linewidth]{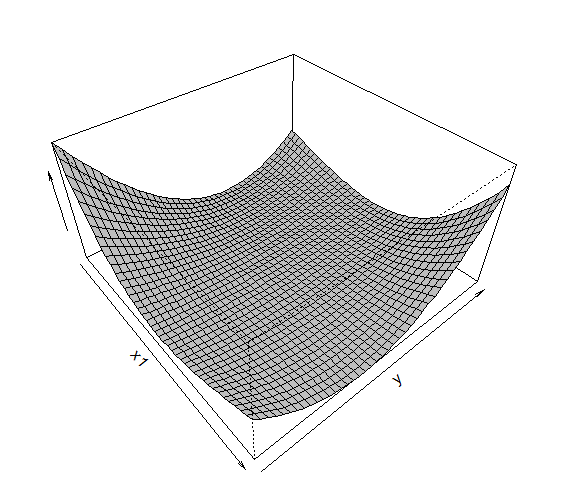}}
\\
\subfloat{\includegraphics[width=0.45\textwidth]{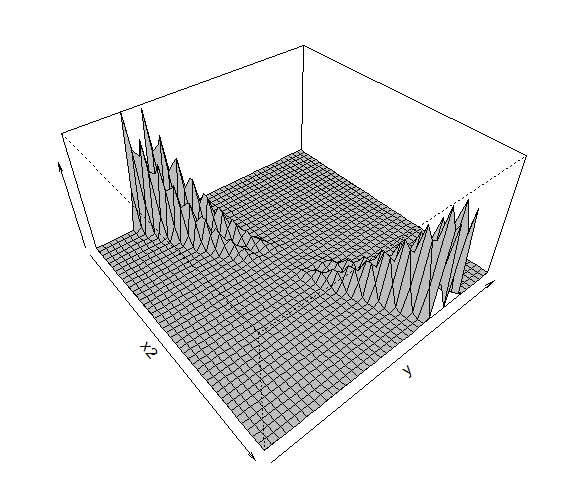}} \
\subfloat{\includegraphics[width=0.45\textwidth]{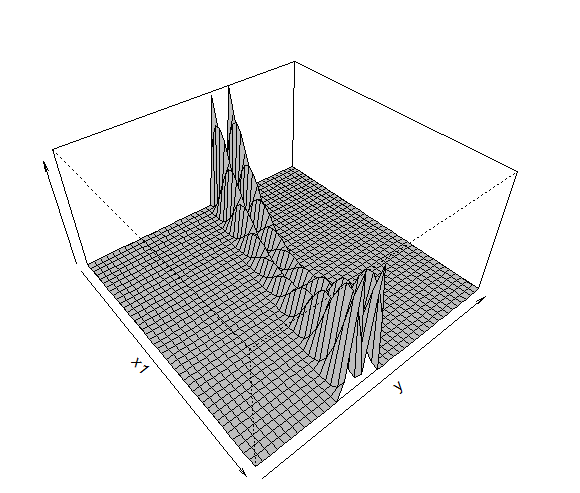}}
\caption{Norm of the IFs of the classical(top) and the robust estimator(bottom) for $x_1 = 1$ and $x_2=1$, respectively}
\end{figure}
Clearly, the classical estimator offers little protection against outlying observations. Its influence function is essentially a paraboloid and is thus unbounded in all directions. By contrast, the influence function of the robust estimator is only unbounded across the thin strips corresponding to large absolute scores on one coordinate in combination with low absolute scores on the other. In other directions one can observe the redescending effect for large outliers, which is inherited from the underlying scale functional. This represents a clear gain in robustness, which can be beneficial when analysing datasets with atypical observations.

\section{Finite sample performance}

\subsection{The competing estimators}

In this section we explore the performance of the proposed estimators by simulation and compare them to three competing methods. We first review these three competing estimators, which are the FPCR$_R$ estimator \citep{reiss2007functional}, the robust MM-spline estimator \citep{maronna2013robust} and the reproducing kernel Hilbert space estimator \citep{shin2016rkhs}.

Let $\mathbf{X}$ denote the $n \times p$ matrix of the discretized signals and let $\mathbf{B}$ denote a $p \times S$ matrix of B-spline basis functions. Then, FPCR$_R$ minimizes
\begin{equation}
\left\|\mathbf{Y}-\alpha \mathbf{1} - \mathbf{X} \mathbf{B} \widehat{\mathbf{V}}_K \boldsymbol{\beta}_1 \right\|^2_E + \lambda \ \boldsymbol{\beta}^{\top}_1 \widehat{\mathbf{V}}_K^{\top} \mathbf{P}^{\top} \mathbf{P} \widehat{\mathbf{V}}_K \boldsymbol{\beta}_1,
\end{equation}
where $\mathbf{P}^{\top}\mathbf{P} = \int b_{i}^{\prime \prime}  b_{j}^{\prime \prime} $ and $\widehat{\mathbf{V}}_K$ is the matrix of the first $K$ right singular vectors of $\mathbf{X} \mathbf{B}$. An estimator for the coefficient function is then given by $\widehat{\beta}_{\text{FPCR$_R$}} = \mathbf{B} \widehat{\boldsymbol{V}}_K \widehat{\boldsymbol{\beta}}_1$. Optimal selection of the smoothing parameters $K$ and $\lambda$ is computationally intensive. In practice, the procedure is implemented by selecting $K$ such that the explained variation of $\mathbf{X}\mathbf{B}$ is $99\%$ while $\lambda$ is estimated by restricted maximum likelihood in the manner outlined in Section 3.3. The default number of B-spline basis functions is 40.

The FPCR$_R$ estimator has been adapted to a variety of settings including functional generalized linear models. However, it is not robust to outliers. Generalizing work from \cite{crambes2009smoothing} \cite{maronna2013robust} were the first to propose a robust functional regression estimator. Their estimator $(\widehat{\alpha}_{MMSp}, \widehat{\beta}_{MMSp}(t))$ minimizes
\begin{equation}
\label{eq:40}
\widehat{\sigma}^2 \sum_{i=1}^n \rho_1 \left( \frac{Y_i - \alpha - p^{-1} \sum_{j=1}^p x_{ij} \beta(t_j)}{\widehat{\sigma}} \right) + \lambda \left(\frac{1}{p} \sum_{i=1}^p \pi_{\beta}(t_j)^2 + \int_{0}^1 \left[\beta^{\prime \prime}(t)\right]^2 dt \right),
\end{equation}
where $\pi_{\beta}$ denotes the projection of $\beta$ onto the space of linear functions and $\rho_1$ is a bounded loss function. The solution to this problem can be shown to be a cubic spline with knots at the time points. \cite{maronna2013robust} propose to select $\lambda$ based on robust leave-one-out cross-validation and to select the grid of candidate values based on the resulting effective degrees of freedom. To obtain an estimator that is robust against leverage points the authors also propose starting the iterations from an initial S-estimator that also yields an estimate $\widehat{\sigma}$, see \cite{maronna2013robust} for further details. The function $\rho_1$ is tuned for 85\% efficiency at the Gaussian model.

Let $K:[0,1]^2 \to \mathbb{R}$ denote the reproducing kernel of the Sobolev space $W^2_2[0,1]$ and define $\eta_i(t) := \int_{0}^1 x_i(u) K(u,t) du$ for $i=1, \ldots, n$. Then \cite{shin2016rkhs} propose to estimate the intercept and the coefficient function by minimizing
\begin{equation}
\label{eq:41}
\frac{1}{n} \sum_{i=1}^n \rho \left( \frac{Y_i - \alpha - \langle \eta_i, \beta \rangle_{W_2^2} }{\widehat{\sigma}} \right) + \lambda \int_{0}^1 \left[\beta^{\prime \prime}(t)\right]^2 dt,
\end{equation}
over $\alpha \in \mathbb{R}$ and $\beta \in \mathcal{W}^2_2 [0,1]$ where $\langle \cdot \rangle_{W^2_2}$ denotes the associated inner product \citep{hsing2015theoretical}. \cite{shin2016rkhs} consider both convex and non-convex loss functions $\rho$ and obtain $\widehat{\sigma}$ from an initial L1 estimator corresponding to $\rho(x) = |x|$. Following their suggestion, $\rho$ is taken to be the Tukey bisquare function and is tuned for 95\% efficiency. The penalty parameter is chosen through generalized cross-validation.

All estimators were implemented in the freeware $\texttt{R}$ \citep{R}. The FPCR$_R$ estimator is implemented through the package \texttt{refund} \citep{refund} and the remaining two estimators are implemented through custom-made functions according to the algorithms provided in the papers. The RFPCR and RFPCPR estimators were both tuned for 95\% nominal efficiency.

\subsection{Numerical results}

We are particularly interested in examining how well the estimators perform under varying levels of noise, contamination, smooth and wiggly coefficient functions and different discretizations of the curves. The latter is an important aspect of the problem since the curves are only rarely observed in their entirety and very often one has to content with noisy measurements at a finite number of points. The following two models represent the building blocks of the simulation experiments.

\textbf{Model 1} (Smooth coefficient function) The predictor curves and the coefficient function are given by
\begin{align*}
X_{ij} &= \mu(t_j) + 0.9 u_{ij}\sqrt{| \mu\left(t_j\right) |} \\ 
\beta_j & = \sqrt{t_{j}} 
\end{align*}
with $\mu(t) = \sin\left(6\pi t\right) \left(t+1\right)$, $\mathbf{u}_i \stackrel{iid}{\sim} \mathcal{N}_p \left( \mathbf{0}, \boldsymbol{\Sigma} \right)$ and $t_j = j/p, j = 1, \ldots, p$. The elements of $\boldsymbol{\Sigma}$ are given by $\Sigma_{ij} =\left( 1+\left( \frac{1}{\rho}-1 \right)(i-j)^2\right)^{-1}$. The mean function $\mu(t)$ corresponds to a sinusoid with increasing amplitude but the predictor curves are contaminated with noise proportional to the square root of the absolute value of $\mu(t)$.  The correlation structure of $\mathbf{u}_i$ indicates that the lag-one correlation between them is equal to $\rho$ and the correlations decay with some persistence. This model has been considered in \citet{maronna2013robust}.

\textbf{Model 2} (Wiggly coefficient function) The predictor curves and the coefficient function are given by
\begin{align*}
X_{ij} &=  \sum_{k=1}^{50} Z_k\sqrt{\lambda}_k v_k(t_j) \\ 
\beta\left(t_j\right) & = \log(1.5t_j^2 + 10) + \cos\left(4 \pi t_j\right),
\end{align*}
with $\left\{Z_k\right\}_{k=1}^{50} \stackrel{iid}{\sim} \mathcal{N}(0,1)$, $\lambda_k = \left((k-\frac{1}{2})^2 \pi^2 \right)^{-1}$ and $v_k(t) = \sqrt{2} \sin((k-0.5)\pi t)$. The predictor curves correspond to finite-dimensional representations of a Wiener process while the coefficient function exhibits oscillations around the logarithmic trend.  The $v_j$s are the eigenfunctions of the covariance operator of $X$, but the coefficient function is not in their linear span. Hence, it can only be approximated by its projection. A similar model was used by \cite{cardot2003spline}.  
\begin{figure}[ht!]
\centering
\subfloat{\includegraphics[width = 0.49\textwidth]{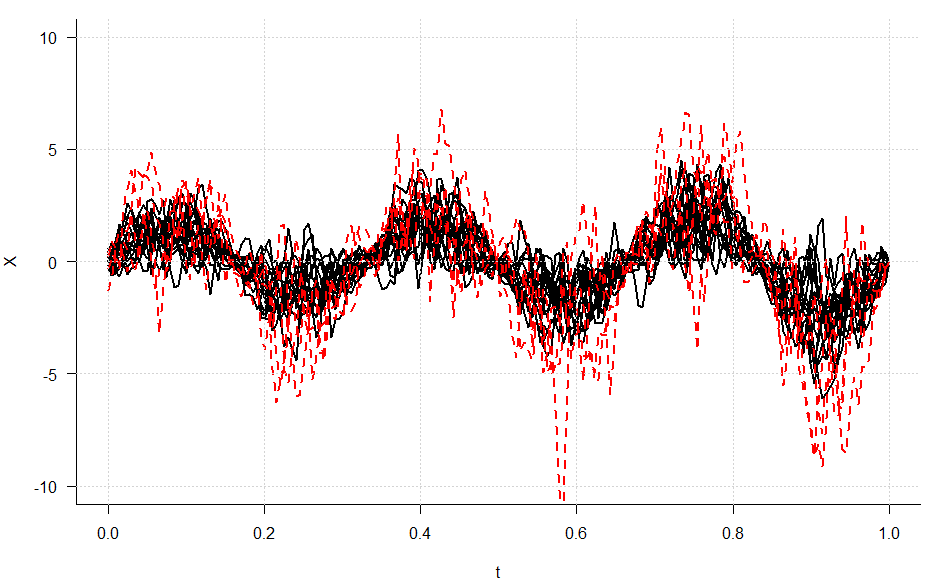}} \
\subfloat{\includegraphics[width = 0.49\textwidth]{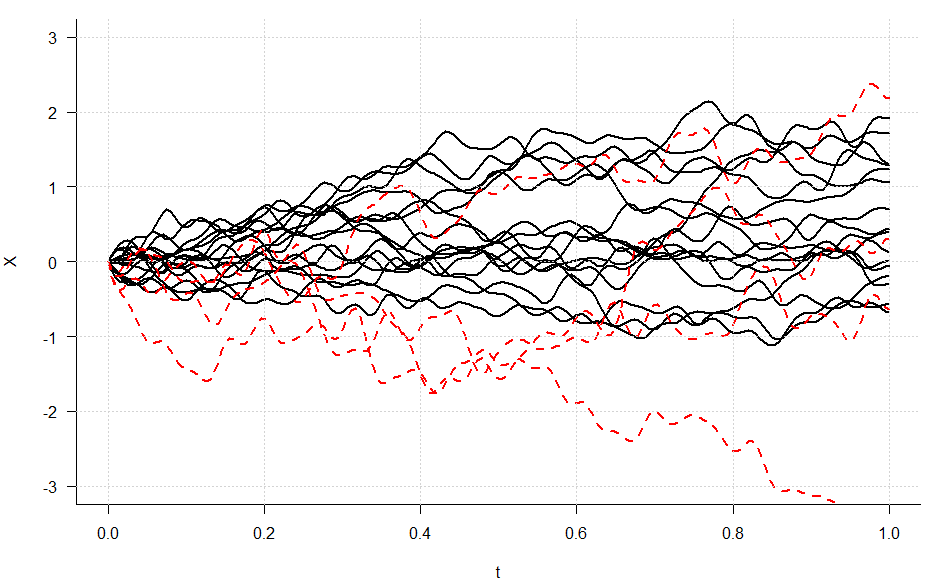}}
\caption{Clean(solid) and contaminated curves(dashed) from the two models}
\label{fig:2}
\end{figure}

In both models the response is generated according to $\mathbf{Y} = \mathbf{y}_0 + \sigma \mathbf{e}$ where $\mathbf{y}_0 = \mathbf{X} \boldsymbol{\beta}$, $\mathbf{e}$ are iid $\mathcal{N}(0,1)$ errors and $\sigma$ regulates the noise-to-signal ratio(NSR). We adopt the contamination scheme from \cite{maronna2013robust}. For $m := n \epsilon$ multiply the first $m$ rows of $\mathbf{X}$ by 2 and modify the corresponding responses by  $y_i = 2\gamma y_{0i}$. Scaling the curves affects their shape and oscillation, hence the scaled curves may be viewed as shape and amplitude outliers, \citep{hubert2015multivariate}. The constant $\gamma$ changes the relationship between predictors and response so that these observations correspond to bad leverage points.  Some clean and contaminated curves are depicted in Figure \ref{fig:2}. 

In order to represent the increasingly frequent setting of ultra-high dimensional data, we consider modest sample sizes of $n=60$ and $p= \left\{ 100, 200 \right\}$. For each of these values the NSR is set to $\left\{0.02, 0.05, 0.1 \right\}$ and we consider datasets with $0\%$, $10\%$ and $20\%$ of contamination. Several values of $\rho$ between $0.5$ and $0.9$ as well as $\gamma$ between $1.1$ and $1.8$ were considered with no qualitative differences across estimators. Hence, we only report the results for $\gamma = 1.7$ and $\rho = 0.7$. 

To compare the methods we consider both predictive and estimation evaluation criteria. Let $\boldsymbol{\widehat{\beta}} := ( \widehat{\beta}(t_1), \ldots, \widehat{\beta}(t_p) )^{\top}$ denote the point estimates. The predictive criterion is $n^{-1} ||\mathbf{y}_0 - \mathbf{X} \boldsymbol{\widehat{\beta}} (\mathbf{X^{*}}, \mathbf{y}^{*}) ||^2_{E} $ which corresponds to the mean-squared prediction error when the estimator is applied to clean data. This criterion measures how well the mean functional is predicted using the contaminated data at our disposal. The estimation criterion is $p^{-1} || \boldsymbol{\widehat{\beta}}(\mathbf{X^{*}}, \mathbf{y}^{*}) - \boldsymbol{\beta}||^2_{E}$, which is an approximation for the integrated squared error. Tables 1 and 2 display the performances of the estimators in each configuration for $1000$ replications. The best performance in each setting is highlighted in bold.

\begin{table}[ht!]

\flushleft{\textbf{Prediction}} 

\centering
{\small
\vspace{0.5cm} 
\begin{tabular}{l| c| c | c c c c c c }
p & NSR & $\epsilon$ & FPCR & FPCR$_R$ & RFPCR & RFPCPR & MMSp & RKHS$_R$   \\ \hline\hline  \\[-0.3cm]
\multirow{9}{*}{100} & \multirow{3}{*}{0.02} & 0 & 0.09 & \textbf{0.07} &   0.12 & 0.13 & 0.28 & 0.21 \\ 
& & 0.1 & 113.01 &  109.82 & 0.12 & \textbf{0.07} & 1.26  & 1.92 \\
& & 0.2 & 155.98 & 155.65 & \textbf{0.34} & 0.35 & 4.91 & 17.31 \\\cline{2-9} \\[-0.3cm]
& \multirow{3}{*}{0.05} & 0 &  0.55 &  0.28 & 0.79  & 0.32 & 1.26  & \textbf{0.25} \\
& & 0.1 & 116.51  & 110.98	  & 1.07 & \textbf{1.01} & 1.95 & 3.59  \\ 
& & 0.2 & 156.69  &  148.20 &  \textbf{3.45} & 3.58 & 7.08 & 22.63 \\\cline{2-9} \\[-0.3cm]
& \multirow{3}{*}{0.1} & 0 &  2.16 & 0.80 & 2.97 & 2.24
 & 4.46  & \textbf{0.57} \\  
& & 0.1 & 111.47 &  115.71 & 5.26 & \textbf{4.05} & 5.16 & 7.14 \\ 
& & 0.2 & 158.60  &  152.79 &  14.32 & \textbf{13.57} & 19.21 &  31.46 \\\cline{1-9} \\[-0.3cm]
\multirow{9}{*}{200} & \multirow{3}{*}{0.02} & 0 & 0.28	 & 0.18 & 0.36 & \textbf{0.16} & 1.41 & 0.77  \\ 
& & 0.1 & 530.89 & 537.69 & 0.40 & \textbf{0.26}  & 5.93 & 9.45 \\
& & 0.2 & 664.89 & 638.37 & 1.11 & \textbf{0.93}  & 48.91 & 113.39 \\\cline{2-9} \\[-0.3cm]
& \multirow{3}{*}{0.05} & 0 &  1.85 &  \textbf{0.72} & 2.50 & 1.00 & 3.45 & 0.85 \\
& & 0.1 & 541.71 & 525.95 &  3.45 & \textbf{2.25} & 9.30 & 15.59\\ 
& & 0.2 &  660.25 & 629.36 & 10.40 & \textbf{9.57} & 55.69 & 120.12 \\\cline{2-9} \\[-0.3cm]
& \multirow{3}{*}{0.1} & 0 &  7.05 & 1.92 & 8.34  & 3.13 & 9.49 & \textbf{1.63} \\  
& & 0.1 & 531.30 & 513.85 & 21.62 & \textbf{17.20} & 21.51	 & 30.16 \\ 
& & 0.2 &  664.36 &   626.49 &  74.73 & \textbf{71.73} & 83.37 & 155.82  \\\cline{1-9}
\end{tabular}}
\vspace{0.5cm} 
\flushleft{\textbf{Estimation}} 

\centering
{\small
\vspace{0.5cm} 
\begin{tabular}{l| c| c | c c c c c c }
p & NSR & $\epsilon$ & FPCR & FPCR$_R$ & RFPCR & RFPCPR & MMSp & RKHS$_R$   \\ \hline\hline  \\[-0.2cm]
\multirow{9}{*}{100} & \multirow{3}{*}{0.02} & 0 & 0.03 & 1.12 &  0.04 & \textbf{0.03} & 1.42 & 0.08 \\ 
& & 0.1 & 12.79 &   6.40 & 0.04 & \textbf{0.02} & 1.79 & 1.11 \\
& & 0.2 & 17.26 & 9.18 & 0.07 & \textbf{0.05} & 1.91 & 2.59 \\\cline{2-9} \\[-0.3cm]
& \multirow{3}{*}{0.05} & 0 &  0.19 &  1.47 & 0.26  & \textbf{0.09} & 1.73  & 0.09 \\
& & 0.1 & 12.83 & 6.45	  &  0.30 & \textbf{0.15} & 1.76 & 1.54 \\ 
& & 0.2 & 17.34  &  9.12 & 0.57 & \textbf{0.37} & 2.19  & 3.15 \\ \cline{2-9} \\[-0.3cm]
& \multirow{3}{*}{0.1} & 0 &  0.76 & 1.91 & 0.97 & 0.45 & 2.64 &  \textbf{0.13} \\  
& & 0.1 & 9.23 &  6.36 & 1.39 & \textbf{0.66} & 2.26 & 3.44 \\ 
& & 0.2 & 17.60  &  9.10 & 2.53 & \textbf{1.45} & 3.64 & 3.74  \\\cline{1-9} \\[-0.3cm]
\multirow{9}{*}{200} & \multirow{3}{*}{0.02} & 0 & 0.05 & 1.34 & 0.06 & \textbf{0.02} & 2.18 & 0.25 \\ 
& & 0.1 & 22.66 & 11.98 & 0.06 & \textbf{0.03} & 2.50 & 0.85 \\
& & 0.2 & 23.16 & 17.48 & 0.1 & \textbf{0.04} & 3.80 & 3.01 \\\cline{2-9} \\[-0.3cm]
& \multirow{3}{*}{0.05} & 0 & 0.28 &  2.19  & 0.40 & \textbf{0.13} & 2.52 & 0.22	 \\
& & 0.1 & 16.83 & 12.08 &  0.42 & \textbf{0.15} & 2.82 & 0.94 \\ 
& & 0.2 &  30.43 & 17.44 & 0.73 & \textbf{0.40} & 4.16 & 3.04 \\\cline{2-9} \\[-0.3cm]
& \multirow{3}{*}{0.1} & 0 &  1.10 & 3.06 & 1.42 & 0.36 & 3.41 & \textbf{0.27} \\  
& & 0.1 & 23.09 & 19.48 & 2.00 & \textbf{0.98} &	3.76 & 1.89	\\ 
& & 0.2 &  23.21 &   17.34 &  4.07 & \textbf{2.41} & 5.22 & 4.05 \\\cline{1-9}
\end{tabular}}
\caption{Prediction and estimation errors for Model 1, best performances in bold}
\end{table}

\begin{table}[ht!]

\flushleft{\textbf{Prediction}} 

\centering
{\small
\vspace{0.5cm} 
\begin{tabular}{l| c| c | c c c c c c }
p & NSR & $\epsilon$ & FPCR & FPCR$_R$ & RFPCR & RFPCPR & MMSp & RKHS$_R$    \\ \hline\hline  \\[-0.3cm]
\multirow{9}{*}{100} & \multirow{3}{*}{0.02} & 0 & 3.36 & \textbf{1.61} &   4.64 & 3.06 & 6.83 & 23.85 \\ 
& & 0.1 & 1059.30 &  986.46 & 4.41 & \textbf{3.60} & 6.30 & 1729.21  \\
& & 0.2 & 2266.09 & 2212.24 & 5.06 & \textbf{4.06} & 5.02 & 2710.11 \\\cline{2-9} \\[-0.3cm]	
& \multirow{3}{*}{0.05} & 0 &  10.16 & \textbf{8.10} & 18.51 & 10.38 & 38.48 & 23.97   \\
& & 0.1 & 1026.14  & 981.20	  & 17.96 & \textbf{12.36} & 40.13 & 2090.00  \\ 
& & 0.2 & 2302.56  &  2287.69 &  20.86 & \textbf{16.12} & 40.54 & 2827.05  \\\cline{2-9} \\[-0.3cm]
& \multirow{3}{*}{0.1} & 0 &  34.26 & \textbf{28.62} & 64.29 & 31.67
 & 153.65 & 31.42  \\  
& & 0.1 & 1059.59 &  1011.85 & 65.70 & \textbf{38.12} & 166.45 & 5769.05 \\ 
& & 0.2 & 2290.27  &  2273.04 & 81.60 & \textbf{51.28} & 227.32 & 3463.95  \\\cline{1-9} \\[-0.3cm]
\multirow{9}{*}{200} & \multirow{3}{*}{0.02} & 0 & 10.42	 & \textbf{6.77} & 13.28 & 11.26 & 24.58 & 93.37 \\ 
& & 0.1 & 4234.64 & 3954.69 & 12.84 & \textbf{11.24}  & 19.20 & 4735.21  \\
& & 0.2 &  9083.97 & 8942.57 & 15.51 & \textbf{13.48} & 16.77 & 8083.39 \\\cline{2-9} \\[-0.3cm]
& \multirow{3}{*}{0.05} & 0 &  41.64 &  \textbf{32.40} & 62.98 & 39.72 & 145.1 & 95.09 \\
& & 0.1 & 4189.38 & 3931.21 &  62.10 & \textbf{46.71} & 131.06 & 6884.85 \\ 
& & 0.2 &  9014.66 & 8891.86 & 77.47 & \textbf{63.30} & 163.62 & 7239.07 \\\cline{2-9} \\[-0.3cm]
& \multirow{3}{*}{0.1} & 0 &  137.44 & \textbf{118.32} & 227.84  & 127.10 & 570.90 & 131.81 \\  
& & 0.1 & 4232.73 & 3972.12 & 235.25 & \textbf{149.36} & 586.26 & 12700.84	\\ 
& & 0.2 &  8868.26 &   8721.79 &  300.84 & \textbf{204.68} & 670.98 & 9040.04 \\\cline{1-9}
\end{tabular}}

\vspace{0.5cm} 
\flushleft{\textbf{Estimation}} 

\centering
{\small
\vspace{0.5cm} 
\begin{tabular}{l| c| c | c c c c c c}
p & NSR & $\epsilon$ & FPCR & FPCR$_R$ & RFPCR & RFPCPR & MMSp & RKHS$_R$     \\ \hline\hline  \\[-0.3cm]
\multirow{9}{*}{100} & \multirow{3}{*}{0.02} & 0 & 0.37 & 0.31 &   0.64 & \textbf{0.31} & 1.49 & 0.48 \\ 
& & 0.1 & 27.52 &   4.48 & 0.57 & \textbf{0.34} & 1.76 & 42.56 \\
& & 0.2 & 22.70	 & 4.03 & 0.61 & \textbf{0.38} & 1.80 & 66.81 \\\cline{2-9} \\[-0.3cm]
& \multirow{3}{*}{0.05} & 0 &  1.07  &  0.52 & 2.33   & 0.52 & 6.36 & \textbf{0.47}   \\
& & 0.1 & 26.22 & 4.16	  &  1.83 & \textbf{0.59} & 7.31 & 68.64 \\ 
& & 0.2 & 32.99  &  4.04 & 1.70 & \textbf{0.67} & 8.26 & 79.28  \\\cline{2-9} \\[-0.3cm]
& \multirow{3}{*}{0.1} & 0 &  1.04 & 0.90 & 7.70 & 0.87 & 23.95 & \textbf{0.48} \\  
& & 0.1 & 36.74 &  4.03 & 6.24 & \textbf{0.95} & 29.11 & 128.66 \\ 
& & 0.2 & 36.96 &  3.98 & 6.37 & \textbf{1.09} & 49.87 & 87.98  \\\cline{1-9} \\[-0.3cm]
\multirow{9}{*}{200} & \multirow{3}{*}{0.02} & 0 & 0.46 & \textbf{0.27} & 0.61 & 0.28 & 1.53 & 0.47 \\ 
& & 0.1 & 32.14 & 3.39 & 0.52 & \textbf{0.31} & 1.61 & 66.16  \\
& & 0.2 & 5.93 & 3.22 & 0.57 & \textbf{0.35} & 81.82 & 89.28 \\\cline{2-9} \\[-0.3cm]
& \multirow{3}{*}{0.05} & 0 & 0.98  &  0.50 & 2.19 & 0.49 & 20.33 &  \textbf{0.46} \\
& & 0.1 & 26.31 & 3.37 &  1.71 & \textbf{0.56}& 9.10 & 85.30 \\ 
& & 0.2 & 28.10 & 3.34 & 1.66 & \textbf{0.65} & 9.23 & 107.93 \\\cline{2-9} \\[-0.3cm]
& \multirow{3}{*}{0.1} & 0 &  2.05 & 0.79 & 7.00 & 0.85 & 29.32 & \textbf{0.73} \\  
& & 0.1 & 33.31 & 3.16 & 5.79 & \textbf{0.94} &	36.22 & 186.70		\\ 
& & 0.2 &  28.60 &  2.98 &  5.67 & \textbf{1.06} & 50.26 & 112.62 \\\cline{1-9}
\end{tabular}}
\caption{Prediction and estimation errors for Model 2, best performances in bold}
\end{table}

\subsection{Discussion}

\subsubsection{Model 1}

The least squares procedures FPCR and FPCR$_R$ provide good estimates for uncontaminated data but their performance rapidly deteriorates in the presence of outliers.  FPCR$_R$ performs often twice as well as FPCR with respect to prediction, but significantly worse with respect to estimation. The reason for this is that the coefficient function is smooth and may be parsimoniously represented by a small number of basis functions. The large number of B-spline basis functions used by FPCR$_R$ is lacking in this respect as it imputes a lot more noise on the estimates. In that respect, the estimation performance of FPCR$_R$ may be substantially improved by restricting the number of basis functions but we have retained the default settings.

Among the robust estimators, RKHS$_R$ performs well in uncontaminated data and much better than the least squares estimators under contamination. However, contamination still has a considerable effect on the estimation. In presence of contamination it is vastly outperformed by both RFPCR and RFPCPR in terms of prediction error as well as estimation error. Also MMSp considerably outperforms RKHS$_R$ in terms of prediction error for contaminated data. We think that the lesser performance of RKHS$_R$ is mainly due to the $L1$ estimator that is used as a starting point in its algorithm. Since the $L1$ estimator has a zero breakdown value in random designs, this non-robust starting value may result in convergence to a "bad" local minimum of the objective function (\ref{eq:41}). On the other hand, the S-estimator used by RFPCR/RFPCPR has maximal breakdown value and thus yields a good starting point for the corresponding IRWLS iterations.

For smaller NSR, RFPCR and RFPCPR perform similarly but their difference grows in favour of RFPCPR as the NSR increases. In this case the RFPCR estimates become more wiggly, and thus smoothing becomes highly advantageous. RFPCPR performs very well with respect to prediction often coming close to FPCR$_R$ in absence of contamination, while even being substantially better with respect to estimation. Both functional principal component estimators outperform MMSp in most settings and the performance of the latter deteriorates noticeably as the discretization increases. Better results for MMSp may be obtained by considering low-rank regression splines and a more thorough search for the penalty parameter at the cost of additional computational effort.

\subsubsection{Model 2}

An interesting feature of the second simulation design is that the previously observed seasaw effect between prediction and estimation error is no longer present. In this more complex situation the large number of basis functions constitutes an advantage for the FPCR$_R$ method as it adds flexibility. Quite expectedly, FPCR$_R$ performs the best with respect to prediction in uncontaminated datasets and very well with respect to estimation, although in the latter case it is almost matched by RFPCPR and mostly outperformed by RKHS$_R$.

The RKHS$_R$ estimator exhibits good estimation performance but overall poor prediction performance. The reason is that it oversmooths the coefficient function and so its peaks and troughs are consistently missed. In general, although not often acknowledged, the performance of cross-validation methods can heavily depend on the selected grid of candidate values of the penalty parameter as well as its resolution. This means that performance can often be improved by extensive manual tuning but this is a difficult and time-consuming task, particularly when an iterative algorithm is used to obtain a solution to the problem. 

The wiggly coefficient function results in worse overall performance for the RFPCR estimator and demonstrates again the advantages of smoothing as RFPCPR exhibits markedly better performance. In absence of contamination, RFPCPR shows similar prediction performance as FPCR and both are outperformed by FPCR$_R$. Under contamination, RFPCPR outperforms all other estimators including MMSp which performs worse than in the previous design. The good performance of MMSp in \citep{maronna2013robust} was only attested for smooth coefficient functions, so the present experiment does not contradict previous findings.

\section{Example: Canadian weather data}

We illustrate the proposed penalized estimator on the well-known Canadian weather dataset, \citep{ramsay2006functional}. The dataset contains daily temperature and precipitation measurements averaged from 1960 to 1994 from 35 weather stations spread over the provinces of Canada. The response $Y$ is the log of the total annual precipitation and $X$ consists of 35 temperature curves. The goal of the analysis is to determine those months whose temperature most critically affects the yearly precipitation. Plots of the temperature curves and a histogram of the log-precipitation are given in Figure \ref{fig:3}.

\begin{figure}[H]
\centering
\subfloat{\includegraphics[scale = 0.50, width = 0.49\textwidth]{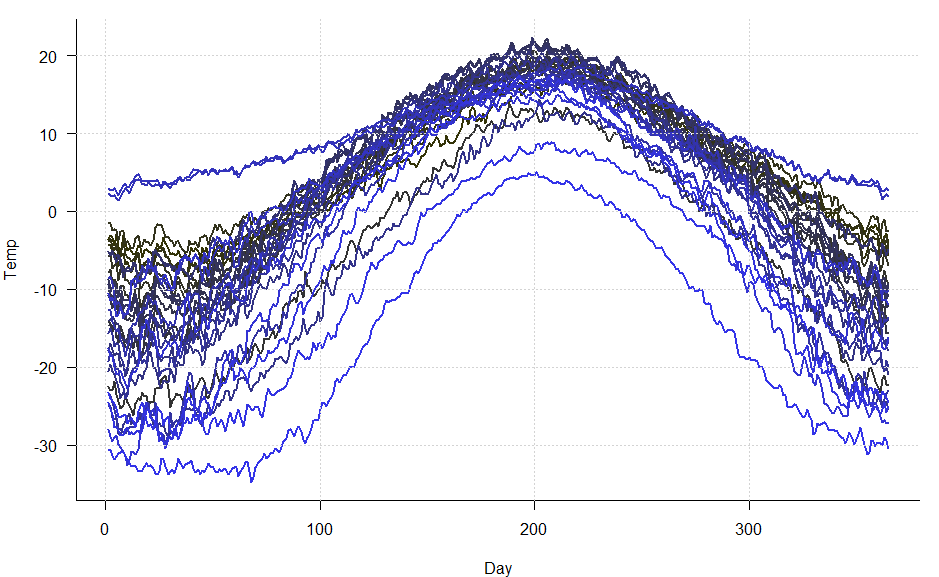}} \
\subfloat{\includegraphics[scale = 0.50, , width = 0.49\textwidth]{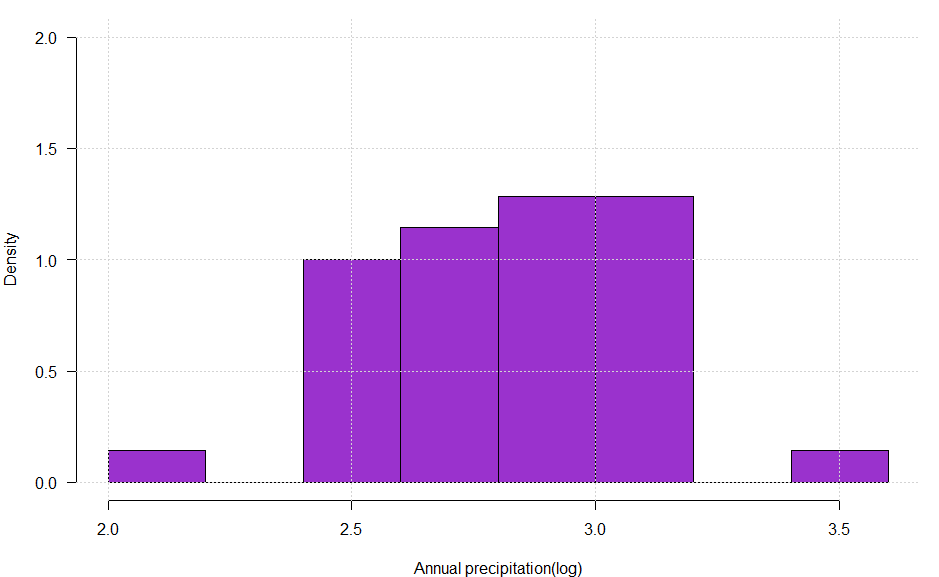}}
\caption{Temperature curves and log annual precipitation}
\label{fig:3}
\end{figure}

Interestingly, these plots already indicate the presence of outliers in both the predictor and the response spaces. These outliers correspond to atypical weather conditions within the country, for example extremely low temperatures in the Arctic regions or heavy rainfall in the province of British Columbia. Ignoring these potential outliers and estimating the coefficient function with FPCR$_R$ yields the solid curve in the left panel of Figure \ref{fig:4}. On the other hand, the solid curve in the right panel corresponds to the RFPCPR estimator of Section 3.2. While the coefficient functions are in broad agreement over the first months of the year, they substantially differ  with respect to the effect of late summer and autumn. 

\begin{figure}[H]
\centering
\subfloat{\includegraphics[scale = 0.50, width = 0.49\textwidth]{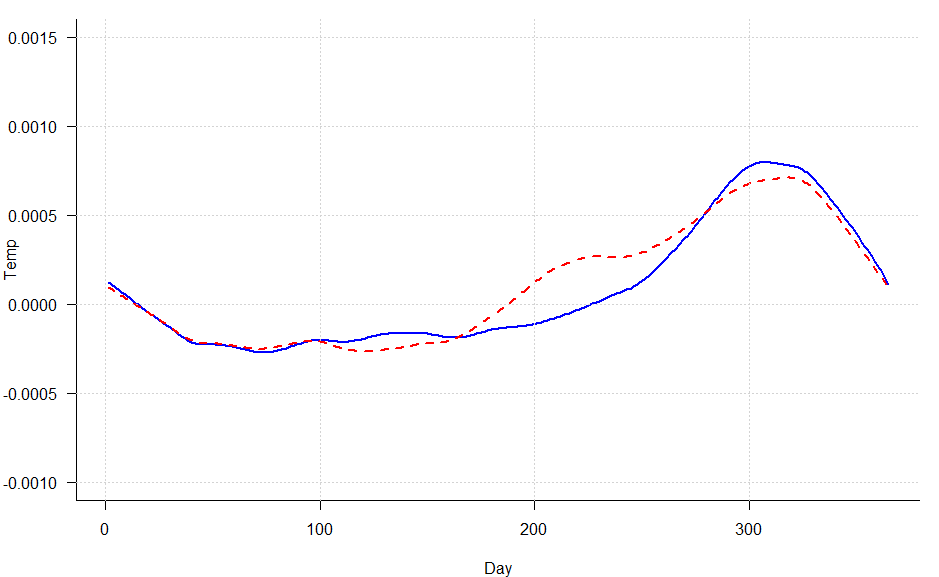}} \
\subfloat{\includegraphics[scale = 0.50, , width = 0.49\textwidth]{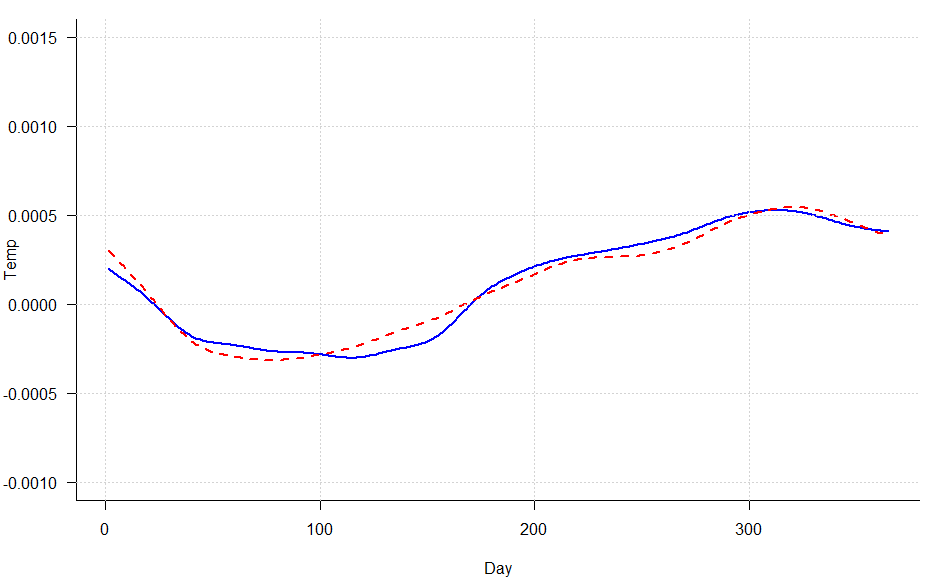}}
\caption{Estimated coefficient functions with full(solid) and outlier-free(dashed) datasets}
\label{fig:4}
\end{figure}

To determine whether this difference is attributable to outliers we may examine the residuals of the robust estimator. Examining the residuals of the FPCR$_R$ is not as informative due to the fact that least-squares estimators suffer from the masking effect \citep{rousseeuw1990unmasking}. This means that the least-squares estimator tries to fit all the data and as a result the fit is pulled towards outlying observations. On the other hand, robust estimators are little affected by atypical observations so that these observations can be identified through their large residuals. Normal QQ plots of the residuals given in Figure \ref{fig:5} illustrate this principle.

\begin{figure}[H]
\centering
\subfloat{\includegraphics[scale = 0.50, width = 0.49\textwidth]{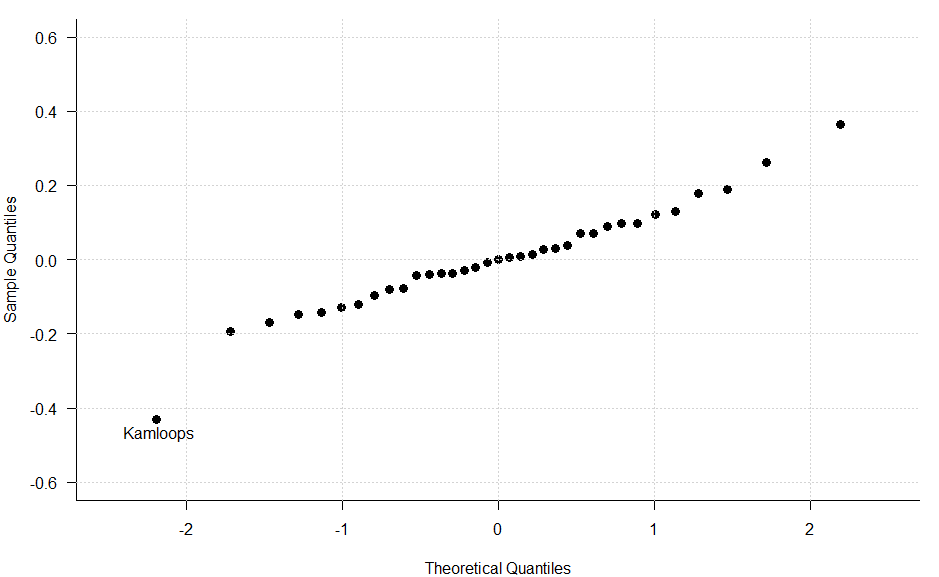}} \
\subfloat{\includegraphics[scale = 0.50, width = 0.49\textwidth]{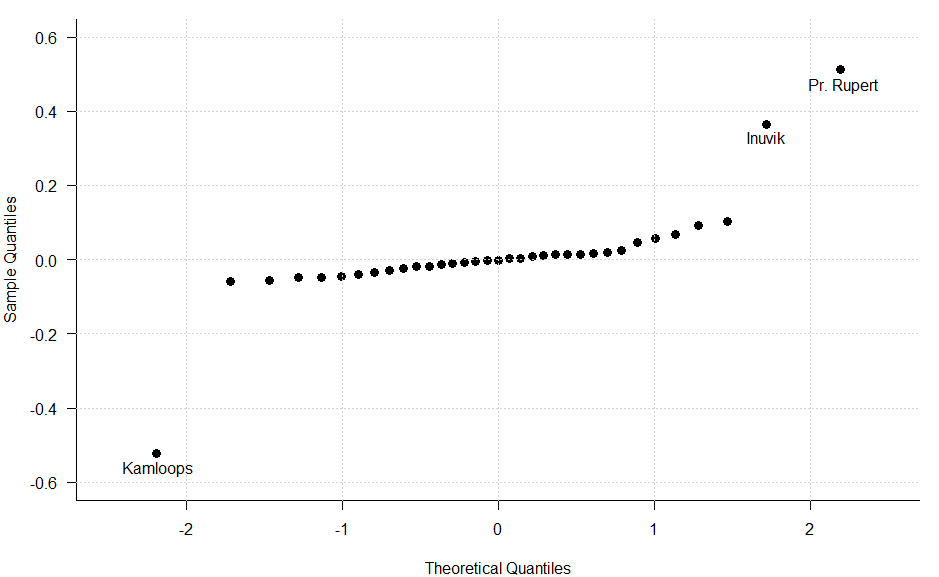}}
\caption{QQ plots of the FPCR$_R$ and the PFPCPR residuals}
\label{fig:5}
\end{figure}

The residuals of the FPCR$_R$ estimator identify only one moderately atypical observation, Kamloops. On the other hand, the residuals of the robust estimator indicate the existence of two additional outlying observations: Inuvik and Pr. Rupert. The outlyingness of Kamloops is also more pronounced. Kamloops and Prince Rupert are actually vertical outliers while Inuvik is an extreme bad leverage point with extremely low temperatures but comparatively low precipitation. Note that although Kamloops is flagged as an outlier by FPCR$_R$ this observation is not downweighted in the fitting process and as a result it still exerts influence on the estimates.  On the contrary, all three outlying observations are assigned a near-zero weight by the MM-estimator in RFPCPR.

As a sensitivity check one can remove these three outliers and recompute the estimators. This yields the dashed curves in Figure \ref{fig:4}. Omitting the outliers results in substantial changes for the FPCR$_R$ estimates, particularly during the summer and autumn months whose importance has now increased. Changes are also observed during the winter months that have now been slightly downweighted due to the exclusion of the exceedingly wet Prince Rupert. On the other hand, the robust estimates exhibit only mild adjustments. Overall, the exclusion of the outliers has brought the FPCR$_R$-estimated coefficient function closer to the robustly-estimated coefficient function.

\section{Conclusion}

In this paper we have proposed two robust functional linear regression estimators. The building blocks are robust functional principal components based on projection-pursuit and MM-estimators of regression. Regressing on the leading functional principal components is a standard recipe to obtain estimators in scalar-on-function regression, but the resulting estimators suffer from two shortcomings: non-robustness and lack of smoothness. To deal with the former we have proposed to replace classical estimates with their robust counterparts. To deal with the latter we have proposed a smoothness-improving transformation of the estimates.

For the proposed estimators we have established consistency results and also studied their robustness via the influence function. The estimators are shown to estimate the right quantities and to converge to them as the sample size increases. At the same time, their influence function reveals that the estimators are resistant to almost all kinds of infinitesimal contamination and therefore enjoy a distinct advantage over estimates obtained by minimizing an L2 norm, classical and modern alike. Simulation results have confirmed this robustness and have further shown that the estimates are not sensitive to the above-mentioned assumptions performing well in clean data and very well under contamination, even in comparison to other robust approaches. The good overall performance of the penalized estimator was also noticed in a real-data example, where it was able to detect outliers that would have been missed by least-squares procedures.

In future work we aim to examine refinements of the estimator in the direction of expressing the coefficient function in terms of a different basis and in relaxing the current set of assumptions regarding the asymptotic theory. Inclusion of scalar or other functional covariates in the current robust framework would also be off interest for both researchers and practitioners.

\newpage
\appendix

\section{Proofs of the theoretical results}

We first recall some notation. The Hilbert space $L^2[0,1]$ is the space of square integrable functions on $[0,1]$ with inner product $\langle x, y \rangle = \int_{0}^1 x(t) y(t) dt$ and norm $||x|| = \sqrt{\langle x, x \rangle  }$. The square-integrable function $X$ has a finite K-dimensional Karhunen-Loève decomposition
\begin{equation*}
X(t) = \mathbb{E}(X)(t)+  \sum_{j=1}^K \langle X -\mathbb{E}(X), v_j \rangle v_j(t).
\end{equation*}
We estimate $\mathbb{E}(X)$ by an M-estimator of location $\widehat{\mu}$  and $\left\{ v_j \right\}_{j=1}^K$ by Qn-based projection-pursuit eigenfunctions $\left\{ \widehat{v}_j \right\}_{j=1}^K$. Setting $\alpha = 0$ and inserting this decomposition in model (\ref{eq:1}) we obtain
\begin{equation*}
\mathbf{Y} = \mathbf{X} \boldsymbol{\beta} + \boldsymbol{\epsilon},
\end{equation*}
with 
\begin{equation*}
\mathbf{Y} : = \begin{bmatrix}
Y_1 \\  \vdots \\  Y_n
\end{bmatrix} \ \mathbf{X} := \begin{bmatrix}
1 & \langle X_1 -\mathbb{E}(X), v_1 \rangle  &  \ldots & \langle X_1 -\mathbb{E}(X), v_K \rangle \\ \vdots &  & \ldots & \vdots  \\ 1 &  \langle X_n-\mathbb{E}(X), v_1 \rangle  &  \ldots & \langle X_n-\mathbb{E}(X), v_K \rangle 
\end{bmatrix} \  \boldsymbol{\beta} :=  \begin{bmatrix}
\langle \beta, \mathbb{E}(X) \rangle  \\ \langle \beta, v_1 \rangle \\ \vdots \\ \langle \beta ,  v_K \rangle
\end{bmatrix} \ \boldsymbol{\epsilon} :=  \begin{bmatrix}
\epsilon_1 \\ \vdots \\  \epsilon_n
\end{bmatrix}.
\end{equation*}
In practice $\mathbf{X}$ has to be replaced by $\widehat{\mathbf{X}}$ given by
\begin{equation*}
\mathbf{\widehat{X}} = \begin{bmatrix}
1 & \langle X_1 -\widehat{\mu}, \widehat{v}_1 \rangle  &  \ldots & \langle X_1 -\widehat{\mu}, \widehat{v}_K \rangle \\ \vdots &  &\ldots & \vdots  \\ 1 &  \langle X_n-\widehat{\mu}, \widehat{v}_1 \rangle  &  \ldots & \langle X_n-\widehat{\mu}, \widehat{v}_K \rangle 
\end{bmatrix}.
\end{equation*}
An estimator for the coefficient function is obtained by 
\begin{equation*}
\widehat{\beta}_{{\scriptscriptstyle \text{RFPCR}}}(t) = \sum_{j=1}^K \widehat{\beta}_j \widehat{v}_j(t),
\end{equation*}
where $\left\{ \widehat{\beta}_{j} \right\}_{j}$ are MM-estimates of regression obtained from $(\mathbf{\widehat{X}}, \mathbf{Y} )$.

\subsection*{Derivation of (\ref{eq:31})}

\begin{proof}

Note that if $\alpha = 0$ then $\mathbb{E}(Y|X(t)| t\in [0,1] = \int X(t) \beta(t) dt$, where formally we condition on the $\sigma$-algebra generated by $\left\{X(t), t \in [0,1] \right\}$. Then
\begin{align*}
\mathbb{E}\left(Y|X(t), t\in [0,1]\right) - \beta_0 - \sum_{j=1}^K \beta_{1j} \int_{0}^1 (X(t)-\mathbb{E}(X)(t)) v_j(t) dt = \sum_{j=K+1}^{\infty} \beta_{1j} \langle X-\mu, v_j \rangle,
\end{align*}
with $\left\{\beta_{1j} \right\}_{j=1}^\infty := \left\{ \beta, v_j \rangle \right\}_{j=1}^{\infty}$. The fact that $\mathbb{E}|| X ||^2 < \infty$ implies that the covariance operator is nuclear(trace-class) with summable eigenvalues, \citep{hsing2015theoretical}. We next observe that by the Cauchy-Schwartz inequality, the Fubini-Tonelli theorem and Parseval's relation,
\begin{align*}
\mathbb{E}\left( \sum_{j=K+1}^{\infty} \beta_{1j} \langle X-\mathbb{E}(X), v_j \rangle \right)^2 &  \leq \sum_{j=K+1}^{\infty} \beta_{1j}^2  \sum_{j=K+1}^{\infty} 
\mathbb{E} \langle X-\mathbb{E}(X), v_j \rangle^2  \leq || \beta ||^2 \sum_{j=K+1}^{\infty} \lambda_j,
\end{align*}
which tends to zero as $K\to \infty$ by the summability of the series.

\end{proof}

We first prove Lemma 4.2, which establishes the consistency of the S-estimator and its associated scale.

\begin{customlem}{4.2}
Assume that conditions (C1)-(C4) hold. Call $\widehat{\boldsymbol{\beta}}^{in}$ the initial S-estimator of regression derived from the dataset $ (\mathbf{\widehat{X}}, \mathbf{Y})$ and $\widehat{\sigma}_n$ its associated scale. Then $\widehat{\boldsymbol{\beta}}^{in} \xrightarrow{P} \boldsymbol{\beta}$ and $\widehat{\sigma}_n \xrightarrow{P} \sigma$.  
\end{customlem}

\begin{proof}

By definition the estimators $\widehat{\boldsymbol{\beta}}_0^{in} $ and $\widehat{\sigma}_n$ satisfy

\begin{align*}
\frac{1}{n} \sum_{i=1}^n \rho_{0} \left( \frac{\widehat{r}_i(\widehat{\boldsymbol{\beta}}^{in})}{\widehat{\sigma}_n} \right) = b, \\
\frac{1}{n} \sum_{i=1}^n \rho_{0}^{\prime} \left( \frac{\widehat{r}_i(\widehat{\boldsymbol{\beta}}^{in})}{\widehat{\sigma}_n} \right) \widehat{\mathbf{x}}_i = \mathbf{0} ,
\end{align*}
with $\widehat{r}_i(\widehat{\boldsymbol{\beta}}^{in}) = Y_i  - \mathbf{\widehat{x}}_i^{\top} \boldsymbol{\widehat{\beta}}_0^{in}$. It is easy to see that the above implicit estimators can be rewritten in a reweighted form:
\begin{align*}
\widehat{\sigma}_n &= \frac{\widehat{\sigma}_n}{nb} \sum_{i=1}^n  \rho_0 \left(  \frac{ \widehat{r}_i \left(\widehat{\boldsymbol{\beta}}^{in} \right) }{\widehat{\sigma}_n} \right), \\
\widehat{\boldsymbol{\beta}}_0^{in} &= \left[ \frac{1}{n} \sum_{i=1}^n  \widehat{w}_i \left( \widehat{\boldsymbol{\beta}}^{in} \right)   \widehat{\mathbf{x}_i} \widehat{\mathbf{x}}_i^{\top} \right]^{-1} \frac{1}{n} \sum_{i=1}^n  \widehat{w}_i \left( \widehat{\boldsymbol{\beta}}^{in} \right) \mathbf{\widehat{x}}_i y_i,
\end{align*}
where $\widehat{w}_i(\widehat{\boldsymbol{\beta}}^{in} ) = \widehat{\sigma}_n\frac{ \rho_0 \left( \widehat{r}_i \left( \widehat{\boldsymbol{\beta}}^{in}\right)/\widehat{\sigma} \right) }{ \widehat{r}_i \left( \widehat{\boldsymbol{\beta}}^{in} \right)}$. To simplify the notation, from now on we write $\widehat{\boldsymbol{\beta}}$ and $\widehat{\sigma}$ instead of $\widehat{\boldsymbol{\beta}}^{in}$ and $\widehat{\sigma}_n$, respectively.

The estimators thus satisfy a fixed point equation $\mathbf{f}:\mathbb{R}^{K+2} \to \mathbb{R}^{K+2}$ defined for $\boldsymbol{\beta} \in \mathbb{R}^{K+1} $ and $\sigma\in \mathbb{R}$ by
\begin{equation*}
\mathbf{f}\left(\sigma , \boldsymbol{\beta}\right) = \begin{bmatrix}
\frac{\sigma}{nb} \sum_{i=1}^n   \rho_0 \left( \frac{\widehat{r}_i\left(\boldsymbol{\beta}\right)}{\sigma} \right) \\ \\
\left[ \frac{1}{n} \sum_{i=1}^n  \widehat{w}_i \left( \boldsymbol{\beta} \right)   \widehat{\mathbf{x}_i} \widehat{\mathbf{x}}_i^{\top} \right]^{-1} \frac{1}{n} \sum_{i=1}^n  \widehat{w}_i \left( \boldsymbol{\beta} \right) \mathbf{\widehat{x}}_i y_i
\end{bmatrix}.
\end{equation*}
Call $\widetilde{\sigma}$ and $\widetilde{\boldsymbol{\beta}}$ the theoretical estimates, that is, the estimates that would have been obtained when $\left\{v_j \right\}_{j=1}^K$ and $\mathbb{E}(X)$ would be known.
Since $\rho_0$ and $\rho_0^{\prime}$ are sufficiently smooth, we can use the integral form of the Mean-Value Theorem for vector-valued functions, see \cite[~Chapter 4]{ferguson2017course} and \cite{feng2014exact}, to write
\begin{equation}
\begin{bmatrix}
\widehat{\sigma} \\
\widehat{\boldsymbol{\beta}} 
\end{bmatrix} = \mathbf{f} \left(\widehat{\sigma}, \boldsymbol{\widehat{\beta}} \right) =  \mathbf{f} \left(\widetilde{\sigma}, \widetilde{\boldsymbol{\beta}} \right) + \int_{0}^1   \nabla{\mathbf{f}} \left( \widetilde{\sigma} + t \left(\widehat{\sigma} - \widetilde{\sigma} \right),  \boldsymbol{\widetilde{\beta}} + t \left( \boldsymbol{\widehat{\beta}}-\boldsymbol{\widetilde{\beta}} \right) \right) \mathrm{dt} \begin{bmatrix}
\widehat{\sigma} - \widetilde{\sigma} \\
\boldsymbol{\widehat{\beta}} - \boldsymbol{\widetilde{\beta}}
\end{bmatrix} .
\end{equation}
Upon rewriting the above equation reveals that
\begin{align}
\label{eq:43}
\begin{bmatrix}
\widehat{\sigma}-\widetilde{\sigma} \\
\boldsymbol{\widehat{\beta}}  - \boldsymbol{\widetilde{\beta}}
\end{bmatrix} &= \left[ \mathbf{I}- \int_{0}^1   \nabla{\mathbf{f}} \left( \widetilde{\sigma} + t \left(\widehat{\sigma} - \widetilde{\sigma} \right),  \boldsymbol{\widetilde{\beta}} + t \left( \boldsymbol{\widehat{\beta}}-\boldsymbol{\widetilde{\beta}} \right) \right) \mathrm{dt} \right]^{-1} \begin{bmatrix} \mathbf{f} \left(\widetilde{\sigma}, \widetilde{\boldsymbol{\beta}} \right)
-\left(\widetilde{\sigma}, \boldsymbol{\widetilde{\beta}}^{\top} \right)^{\top}\end{bmatrix} \nonumber \\
& = \left[ \int_{0}^1 \left( \mathbf{I}-    \nabla{\mathbf{f}} \left( \widetilde{\sigma} + t \left(\widehat{\sigma} - \widetilde{\sigma} \right),  \boldsymbol{\widetilde{\beta}} + t \left( \boldsymbol{\widehat{\beta}} - \boldsymbol{\widetilde{\beta}} \right) \right) \right) \mathrm{dt} \right]^{-1} \begin{bmatrix} \mathbf{f} \left(\widetilde{\sigma}, \widetilde{\boldsymbol{\beta}} \right)
-\left(\widetilde{\sigma}, \boldsymbol{\widetilde{\beta}}^{\top} \right)^{\top} \end{bmatrix}.
\end{align}
We proceed by showing that $ \begin{bmatrix} \mathbf{f} \left(\widetilde{\sigma}, \widetilde{\boldsymbol{\beta}} \right)
-\left(\widetilde{\sigma}, \boldsymbol{\widetilde{\beta}}^{\top} \right)^{\top}\end{bmatrix} = o_P(1)$ as $n \to \infty$. This holds if and only if all components are $o_{P}(1)$. For the first component, the scale, it suffices to note that
\begin{align*}
\frac{1}{n}\sum_{i=1}^n \rho_0 \left( \frac{Y_i - \widehat{\mathbf{x}}_i^{\top} \widetilde{\boldsymbol{\beta}} }{\widetilde{\sigma}} \right) = \frac{1}{n} \sum_{i=1}^n \rho_0 \left( \frac{Y_i - \mathbf{x}_i^{\top} \widetilde{\boldsymbol{\beta}}}{\widetilde{\sigma}}  \right) + \frac{1}{n\widetilde{\sigma}} \sum_{i=1}^n  \rho^{\prime}_0\left(r_i^{*} \right) \left(  \mathbf{x}_i^{\top} \widetilde{\boldsymbol{\beta}} -  \widehat{\mathbf{x}}_i^{\top} \widetilde{\boldsymbol{\beta}} \right),
\end{align*}
with $r^{*}_i$ between $\widehat{\mathbf{x}}_i^{\top} \widetilde{\boldsymbol{\beta}}/\widetilde{\sigma}$ and $\mathbf{x}_i^{\top} \widetilde{\boldsymbol{\beta}}/\widetilde{\sigma}$. Since by Theorems 4.1 and 4.2 of \citep{yohaitec} $\widetilde{\sigma}$ and $\widetilde{\boldsymbol{\beta}}$ are consistent they are also bounded in probability. 
Furthermore, 
\begin{equation*}
\left| \frac{1}{n} \sum_{i=1}^n  \rho^{\prime}_0\left(r_i^{*} \right) \left(  \mathbf{x}_i^{\top} \widetilde{\boldsymbol{\beta}} -  \widehat{\mathbf{x}}_i^{\top} \widetilde{\boldsymbol{\beta}} \right) \right| \leq \frac{C ||\boldsymbol{\widetilde{\beta}} ||}{n} \sum_{i=1}^n \left\| \mathbf{x}_i - \widehat{\mathbf{x}}_i \right\|_{E},
\end{equation*}
for some constant $C>0$ by the boundedness of $\rho_{0}^{\prime}(x)$ and the Schwarz inequality. An easy calculation now shows that
\begin{equation*}
\frac{1}{n} \sum_{i=1}^n \left\|\mathbf{\widehat{x}}_i - \mathbf{x}_i  \right\|_{E} \leq \frac{1}{n} \sum_{i=1}^n  \left\| X_i -\mathbb{E}(X) \right\| \sum_{j=1}^K \left\| \widehat{v}_j -v_j \right\| + K \left\|\widehat{\mu}-\mathbb{E}(X) \right\|,
\end{equation*}
which is $o_{P}(1)$ by the Law of Large Numbers and  the consistency of the eigenfunctions and the functional M-estimator of location, see theorem 4.2 of \cite{bali2011robust} and theorem 3.4 of \cite{sinova2018m}, respectively. Combining the above facts, it now follows that
\begin{equation}
\frac{1}{n}\sum_{i=1}^n \rho_0 \left( \frac{Y_i - \widehat{\mathbf{x}}_i^{\top} \widetilde{\boldsymbol{\beta}} }{\widetilde{\sigma}} \right) = b + o_P(1),
\end{equation}
so that the first component of $\begin{bmatrix} \mathbf{f} \left(\widetilde{\sigma}, \widetilde{\boldsymbol{\beta}} \right)
-\left(\widetilde{\sigma}, \boldsymbol{\widetilde{\beta}}^{\top} \right)^{\top}\end{bmatrix}$ is indeed $o_P(1)$. 

For the other $K+1$ components it is helpful to note that the theoretical estimator $\boldsymbol{\widehat{\beta}}$ can also be written in a weighted least squares form, so that we have
\begin{equation*}
\boldsymbol{\widetilde{\beta}} = \left[ \frac{1}{n} \sum_{i=1}^n  w_i ( \boldsymbol{\widetilde{\beta}} ) \mathbf{x}_i \mathbf{x}_i^{\top} \right]^{-1} \frac{1}{n} \sum_{i=1}^n w_i ( \boldsymbol{\widetilde{\beta}}) \mathbf{x}_i y_i,
\end{equation*}
with $w_i ( \boldsymbol{\widetilde{\beta}} ) = \widetilde{\sigma}\frac{ \rho_0 \left( r_i \left( \widetilde{\boldsymbol{\beta}} \right)/\widetilde{\sigma} \right) }{ r_i \left( \widetilde{\boldsymbol{\beta}} \right)}$ and $r_i(\boldsymbol{\beta}) = Y_i - \mathbf{x}^{\top}_i \boldsymbol{\beta}$. We compare $n^{-1}\sum_{i=1}^n \widehat{w}_i ( \boldsymbol{\widetilde{\beta}}) \widehat{\mathbf{x}}_i y_i$ and its theoretical counterpart $n^{-1} \sum_{i=1}^n w_i ( \boldsymbol{\widetilde{\beta}}) \mathbf{x}_i y_i $ and show that they  have the same limit, which by the consistency of MM-estimators is $ \sigma \mathbb{E}(\mathbf{x}Y) \mathbb{E}\rho_{0}^{\prime}(\epsilon/\sigma)/\epsilon<\infty$. Since
\begin{align*}
\left\| \frac{1}{n} \sum_{i=1}^n y_i \left(\widehat{w}_i ( \boldsymbol{\widetilde{\beta}}) \mathbf{\widehat{x}}_i - w_i ( \boldsymbol{\widetilde{\beta}}) \mathbf{x}_i  \right) \right\|_{E} 
\leq \frac{1}{n} \sum_{i=1}^n \left|y_i \right| \left\| \mathbf{\widehat{x}}_i - \mathbf{x}_i \right\|_{E} + \frac{1}{n} \sum_{i=1}^n \left| y_i \right| \left|  \right| \left\| \mathbf{x}_i \right\|_{E} \left| \widehat{w}_i ( \boldsymbol{\widetilde{\beta}}) - w_i ( \boldsymbol{\widetilde{\beta}}) \right|,
\end{align*}
the previous argument shows that the first term is $o_{P}(1)$. For the second term note that the function $\rho_0(x)/x$ has a bounded derivative and hence it satisfies a Lipschitz condition. Therefore, for some $D>0$
\begin{equation}
\left| \widehat{w}_i ( \boldsymbol{\widetilde{\beta}}) - w_i ( \boldsymbol{\widetilde{\beta}}) \right| \leq \frac{D}{\widehat{\sigma}} \left| \widehat{x}_{i}^{\prime}\boldsymbol{\widetilde{\beta}} -\mathbf{x}_i ^{\prime} \boldsymbol{\widetilde{\beta}} \right| \leq \frac{D}{\widehat{\sigma}}  \left\| \mathbf{\widehat{x}}_i - \mathbf{x}_i \right\|_{E} || \boldsymbol{ \widetilde{\beta}} ||_{E},
\end{equation}
which implies that the second term is also $o_{P}(1)$. Similarly, but more tediously, using condition (C4) we can show that
\begin{equation*}
\frac{1}{n} \sum_{i=1}^n \left\| \widehat{w}_i ( \boldsymbol{\widetilde{\beta}} ) \mathbf{\widehat{x}}_i \mathbf{\widehat{x}}_i^{\top} - w_i ( \boldsymbol{\widetilde{\beta}} ) \mathbf{x}_i \mathbf{x}_i^{\top}  \right\|_{E} = o_{P}(1),
\end{equation*}
which in combination with Slutsky's theorem now implies the desired result as
\begin{equation}
\frac{1}{n}\sum_{i=1}^n w_i ( \boldsymbol{\widetilde{\beta}} ) \mathbf{x}_i \mathbf{x}_i^{\top} \xrightarrow{P} \mathbb{E} ( \mathbf{x}\mathbf{x}^{\top} ) \sigma \mathbb{E} \left( \frac{ \rho_0^{\prime} (\epsilon/ \sigma)}{\epsilon} \right)< \infty,
\end{equation}
by the independence of $\left\{X(t); t\in [0,1] \right\}$ and $\epsilon$ and the fact that $g(x):= \langle x, y  \rangle, y \in L^2[0,1] $ is a Borel-measurable function.

To complete the proof we show that 
\begin{equation}
\int_{0}^1 \left( \mathbf{I}-    \nabla{\mathbf{f}} \left( \widetilde{\sigma} + t \left(\widehat{\sigma} - \widetilde{\sigma} \right),  \boldsymbol{\widetilde{\beta}} + t \left( \boldsymbol{\widehat{\beta}} - \boldsymbol{\widetilde{\beta}} \right) \right) \right) \mathrm{dt} = O_{P}(1).
\end{equation}
Similar calculations as in \cite{salibian2002bootstrapping} show that
\begin{equation*}
\nabla \mathbf{f}(\sigma, \boldsymbol{\beta}) = \begin{bmatrix}
\frac{1}{nb}\sum_{i=1}^n \rho_0\left(\frac{\widehat{r}_i(\boldsymbol{\beta})}{\sigma} \right) - \frac{1}{n \sigma b}\sum_{i=1}^n \rho_0^{\prime} \left( \frac{\widehat{r}_i(\boldsymbol{\beta})}{\sigma}\right)  \widehat{r}_i(\boldsymbol{\beta}) & - \frac{1}{nb} \sum_{i=1}^n \rho_{0}^{\prime} \left( \frac{\widehat{r}_i(\boldsymbol{\beta}}{\sigma}\right)\mathbf{\widehat{x}}_i^{\top} \\ \mathbf{d} & \mathbf{I}_{K+1}-\mathbf{A}^{-1} \frac{1}{n} \sum_{i=1}^n \rho_0^{\prime \prime}\left( \frac{\widehat{r}_i\left(\boldsymbol{\beta} \right)}{\sigma} \right) \frac{\mathbf{\widehat{x}}_i \mathbf{\widehat{x}}_i^{\top}}{\sigma}
\end{bmatrix},
\end{equation*}
with 
\begin{equation*}
\mathbf{A} = \frac{1}{n} \sum_{i=1}^n \frac{ \rho_0^{\prime}\left(\frac{\widehat{r}_i(\boldsymbol{\beta}}{\sigma} \right) }{\widehat{r}_i(\boldsymbol{\beta})} \mathbf{\widehat{x}}_i \mathbf{\widehat{x}}_i^{\top},
 \end{equation*}
and
\begin{align*}
\mathbf{d} = &  \mathbf{A}^{-1}  \frac{1}{n}\sum_{i=1}^n \rho_0 ^{\prime \prime} \left( \frac{\widehat{r}_i(\boldsymbol{\beta})}{\sigma}\right) \frac{\mathbf{\widehat{x}}_i \mathbf{\widehat{x}}_i}{\sigma^2}  \mathbf{A}^{-1} \frac{1}{n}\sum_{i=1}^n  \frac{\rho_0^{\prime} \left(\frac{\widehat{r}_i(\boldsymbol{\beta})}{\sigma} \right)} {\widehat{r}_i(\boldsymbol{\beta})}\mathbf{\widehat{x}}_i y_i  -  \mathbf{A}^{-1}  \frac{1}{n} \sum_{i=1}^n \rho_0^{\prime \prime} \left( \frac{r_i(\boldsymbol{\beta})}{\sigma} \right) \frac{\mathbf{\widehat{x}}_i y_i}{\sigma^2}.
\end{align*}
Therefore,
\begin{equation*}
\mathbf{I} - \nabla \mathbf{f}(\sigma, \boldsymbol{\beta}) = \begin{bmatrix}
1+ \frac{1}{n \sigma b}\sum_{i=1}^n \rho_0^{\prime} \left( \frac{\widehat{r}_i(\boldsymbol{\beta})}{\sigma}\right)  \widehat{r}_i(\boldsymbol{\beta}) - \frac{1}{nb}\sum_{i=1}^n \rho_0\left(\frac{\widehat{r}_i(\boldsymbol{\beta})}{\sigma} \right)  
& \frac{1}{nb} \sum_{i=1}^n \rho_{0}^{\prime} \left( \frac{\widehat{r}_i(\boldsymbol{\beta}}{\sigma}\right)\mathbf{\widehat{x}}_i^{\top}\\ - \mathbf{d} & \mathbf{A}^{-1} \frac{1}{n} \sum_{i=1}^n \rho_0^{\prime \prime}\left( \frac{\widehat{r}_i\left(\boldsymbol{\beta} \right)}{\sigma} \right) \frac{\mathbf{\widehat{x}}_i \mathbf{\widehat{x}}_i^{\top}}{\sigma}\end{bmatrix},
\end{equation*}
and from the above it is clear these expressions are well-defined. By the Law of Large Numbers, the smoothness and boundedness of $\rho_0^{\prime \prime}(x)$, $\rho_0^{\prime}(x) x$ and $\rho_0^{\prime}(x)/x$ as well as the asymptotic equivalence of $n^{-1} \sum_{i=1}^n \mathbf{\widehat{x}}_i \mathbf{\widehat{x}}_i^{\top}$ and $n^{-1} \sum_{i=1}^n \mathbf{x}_i \mathbf{x}_i^{\top}$ they are also $O_{P}(1)$. 

Hence, $\mathbf{I} - \nabla \mathbf{f}(\sigma, \boldsymbol{\beta})$ is bounded in probability for all $\left( \sigma, \boldsymbol{\beta}\right)$ and as a result
\begin{equation}
\int_{0}^1 \left[ \mathbf{I}-    \nabla{\mathbf{f}} \left( \widetilde{\sigma} + t \left(\widehat{\sigma} - \widetilde{\sigma} \right),  \boldsymbol{\widetilde{\beta}} + t \left( \boldsymbol{\widehat{\beta}} - \boldsymbol{\widetilde{\beta}} \right) \right) \right] \mathrm{dt} = O_{P}(1),
\end{equation}
as required.

\end{proof}

\begin{customprop}{4.1}  Assume that conditions (C1)-(C4) are satisfied. Then $\left\| \widehat{\beta}_{{\scriptscriptstyle \text{RFPCR}}} - \beta \right\| \xrightarrow{P} 0$.
\end{customprop}

\begin{proof}
We have
\begin{align*}
\left\| \widehat{\beta}_{{\scriptscriptstyle \text{RFPCR}}} - \beta \right\|  = \left\| \sum_{j=1}^K \widehat{\beta}_{j} \widehat{v}_{j} - \sum_{j=1}^K \beta_j v_j  \right\| \nonumber  & \leq \left\| \sum_{j=1}^K \left( \widehat{\beta}_{j} - \beta_j \right) \widehat{v}_{j} \right\| + \left\| \sum_{j=1}^K \beta_j \left(\widehat{v}_{j} - v_j \right) \right\| \nonumber 
\\ &\leq \sum_{j=1}^K \left| \widehat{\beta}_{j} - \beta_j \right| + \sum_{j=1}^K \left| \beta_j \right| \left\| \widehat{v}_{j} - v_j \right\|,
\end{align*}
by Minkowski's inequality and the fact that $\left\|\widehat{v}_{j} \right\| = 1$ for all $j = 1, \ldots, K$. The second term can be dealt with most easily by observing that by Parseval's equality for all $j$, $\left|\beta_j\right| \leq \max(1, \beta_j^2)\leq  \max(1, \sum_{j=1}^{\infty}\beta_j^2) = \max(1, \left\| \beta \right\|^2) \leq 1 + || \beta ||^2 < \infty$, since, by assumption, $\beta \in L^2 \left[0,1\right]$. Hence, the scores are uniformly bounded. Thus, the second term converges to zero in probability by the convergence in norm of the eigenfunctions. Lemma 4.2 covers the first term as an S-estimator may be treated as an M-estimator without updating the scale. Combining these two facts yields the result.

\end{proof}

\begin{customcor}{4.2}
Consider the vector $\widehat{\boldsymbol{\beta}} := \left( \widehat{\beta_0}, \widehat{\boldsymbol{\beta}}_1^{\top} \right)^{\top}$ from estimating equation (13). Under (C1)-(C4)
\begin{equation*}
\sqrt{n}(\widehat{\boldsymbol{\beta}} - \boldsymbol{\beta} ) \xrightarrow{D} \mathcal{N}\left(0,  \sigma^2 \frac{\mathbb{E}_F \left( \rho_{1}^{\prime}(\epsilon/ \sigma)^2 \right)}{(\mathbb{E}_F \left( \rho_1^{\prime \prime}(\epsilon/\sigma) \right))^2 } \diag\left(1, \lambda_1^{-1}, \ldots \lambda_K^{-1} \right) \right).
\end{equation*}
\end{customcor}

\begin{proof}
First, note that by (34) and (35) this is the asymptotic distribution of the "theoretical" MM-estimator $\widetilde{\boldsymbol{\beta}}$, see \citep{maronna2006robust}. But since
\begin{equation*}
\sqrt{n}(\widehat{\boldsymbol{\beta}} - \boldsymbol{\beta} ) = \sqrt{n}(\widehat{\boldsymbol{\beta}} - \widetilde{\boldsymbol{\beta}} ) + \sqrt{n}(\widetilde{\boldsymbol{\beta}} - \boldsymbol{\beta} ),
\end{equation*}
Slutzky's theorem would imply the result provided that $\sqrt{n}(\widehat{\boldsymbol{\beta}} - \widetilde{\boldsymbol{\beta}} ) \xrightarrow{P} 0 $. The previous representation of this difference in equation (\ref{eq:43}) implies that 
\begin{equation*}
\sqrt{n}(\widehat{\boldsymbol{\beta}} - \widetilde{\boldsymbol{\beta}} ) = \mathbf{A}_n \sqrt{n} ( \mathbf{f}(\widetilde{\boldsymbol{\beta}}) - \widetilde{\boldsymbol{\beta}}  ),
\end{equation*}
with $\mathbf{f}: \mathbb{R}^{K+1} \to \mathbb{R}^{K+1}$ the fixed point function and $\mathbf{A}_n$ a matrix whose entries are bounded in probability. The difference on the right tends to zero in probability because
\begin{equation*}
\frac{1}{\sqrt{n}}  \sum_{i=1}^n \left( \widehat{w}_i ( \boldsymbol{\widetilde{\beta}}) \mathbf{\widehat{x}}_i y_i - w_i ( \boldsymbol{\widetilde{\beta}}) \mathbf{x}_i y_i \right) = o_{P}(1),
\end{equation*}
which can be established by bounding the difference $\widetilde{\mathbf{x}}_i - \mathbf{x}_i $, as before, and then applying the Central Limit Theorem to show that terms such as $n^{-1/2} \sum_{i=1}^n |y_i| \left\|X_i - \mathbb{E}(X) \right\|$ are $O_{P}(1)$ while at the same time using the consistency of the eigenfunctions and the MM-estimators of location.

\end{proof}

\newpage
\bibliographystyle{apalike}
\bibliography{NP}

\end{document}